\colorlet{BLUE}{blue}
\newtheorem{theorem}{Theorem}[section]
\newtheorem{lemma}{Lemma}[section]
\theoremstyle{corollary}
\theoremstyle{definition}
\newtheorem{definition}{Definition}[section]
\numberwithin{equation}{section}
\def\longdelete#1{}
\subjclass[2020]{35K57, 35C07, 35B50, 37N25}
\begin{document}

\title[Savanna Dynamics]{Savanna dynamics with grazing, browsing, and migration effects}

\author[Chen, C-C]{Chiun-Chuan Chen$^{1,3}$}
\address{$^1$ Department of Mathematics, National Taiwan University, Taiwan}
\email[Chen, C-C]{$^{1,3}$chchchen@math.ntu.edu.tw}

\author[Hsiao, T-Y]{Ting-Yang Hsiao$^2$}
\address{$^2$ (corresponding author) University of Illinois, Urbana-Champaign, Department of Mathematics, USA}
\email[Hsiao, T-Y]{$^2$tyhsiao2@illinois.edu}

\author[Wang, S-C]{Shun-Chieh Wang$^3$}
\address{$^3$ National Center for Theoretical Science, Taipei, Taiwan}
\email[Wang, S-C]{$^3$rjaywang1130@ncts.ntu.edu.tw}

\keywords{Traveling wave, Savanna dynamics, Tree-grass coexistence, N-barrier maximum principle, Lotka-Volterra, Reaction-diffusion system}

\date{\today}

\maketitle

\begin{abstract}
This article explores the dynamics of savanna ecosystems with grazing, browsing, and migration effects. Covering over one-eighth of the Earth’s land area and supporting about one-fifth of the global population, the savanna is an ecological system whose importance has only recently garnered significant attention from biologists. The interactions between organisms in this ecosystem are highly complex, and fundamental mathematical issues remain unresolved. We rigorously analyze traveling waves in savanna systems and focus on whether trees, grass, grazers, and browsers coexist. We demonstrate the existence of various traveling waves, including waves transitioning from extinction to co-existence and waves from a grass-vegetation state (where only grass and grazers exist) to co-existence. Due to the biodiversity of species in grassland ecosystems, it is not appropriate to consider overly simplified models of competition between grasses and trees. From both a biological and mathematical perspective, factors such as animal grazing, browsing, and migration (which facilitates seed dispersal) play a crucial role in promoting ecological stability and coexistence. Additionally, we estimate the nonzero minimum value of the total plant biomass within the savanna dynamic system to better understand the persistence and stability of sustainable development within the ecosystem.
\end{abstract}


\section{Introduction}
In a savanna (mixed woodland-grassland), animals play a crucial role in maintaining the balance of ecosystems by grazing, browsing, and migration (diffusion). First, herbaceous plants and woody plants compete for water in the upper soil layers, and grazing reduces this competition, potentially promoting tree growth. Second, the removal of herbaceous plants as fuel reduces the intensity and frequency of fires, helping to spread the species of woody plants. Third, animals also have direct impacts on herbaceous plants and woody plants by grazing and browsing edible plants. Evidence suggests that in savannas, woody plants with poor palatability increase in grazing areas. Grazing, browsing, and migration also play a role in spreading the seeds of herbaceous plants and woody plants. For instance, cattle and horses contribute to the seed dispersal of certain weed species, such as Acacia nilotica and species in the Stylosanthes genus. The changes in species composition caused by grazing and browsing in savannas alter ecosystem functions. The research on savanna dynamic systems can be referenced in the following materials \cite{gordon2008ecology,gordon2019ecology}.

From a biological perspective, in natural community, trees and herbaceous plants may not coexist. Many biological studies have observed this phenomenon. Ecosystems, where trees and herbaceous plants coexist, can suddenly shift to tree-dominated ecosystems (known as shrub encroachment), which has become a serious problem \cite{brown1998spatial,van2000shrub,groen2017spatially}. On the other hand, an overabundance of herbaceous plants increases the likelihood of wildfires, which leads to the death of many trees \cite{beckage2011grass}. Both of these effects contribute to the inability of trees and herbaceous plants to coexist.

The co-existence of trees and herbaceous plants can be explained by niche separation mechanisms, where plants differentiate along environmental axes such as light, soil moisture, and root depth \cite{sydes1984comparative}. Also, the two-layer hypothesis \cite{walter1971ecology} explains the long-term co-existence of trees and grasses in grasslands, suggesting that differences in root depth allow trees to access deeper soil layers while grasses dominate the upper layers. This results in weak competition between the two, with grasses having an advantage in the upper soil. Under this biological effect, one may assume $\kappa_1>0$ and $\kappa_2>0$ in the model \eqref{o W} below are small.

In human-managed systems, a reasonable assumption is that grazing and browsing pressure is constant. 
Co-existence states based on this assumption can be observed in previous studies \cite{dublin1990elephants, van2003effects, d2006probabilistic,groen2017spatially}. Also, ignoring the effects of natural fires and the water resources brought by seasonal rainfall (for instance, see \cite{klimasara2023model}) is reasonable in human-managed systems since the grazing, browsing, and migration of animals decrease the possibility of wildfire and the human-managed environment can provide a stable water supply. In this paper, we aim to better understand the co-existence state. Taking into account the effects of animal and plant migration, as well as the impacts of grazing and browsing, we adopt the models considered in \cite{klimasara2023model} and \cite{beckage2011grass}, and propose the following system with migration effects:
\begin{equation} \label{o W}
    \begin{aligned} 
    W_t&=d_W W_{xx}+r_W W(1-\frac{W}{K_W}-\frac{\kappa_1 H}{K_H})-c_W BW,\\
    H_t&=d_H H_{xx}+r_H H(1-\frac{H}{K_H}-\frac{\kappa_2 W}{K_W})-c_H GH,\\
    G_t&=d_G G_{xx}+e_H GH-a_G G^2,\\ 
    B_t&=d_B B_{xx}+e_W BW-a_B B^2,
\end{aligned}
\end{equation}
where $W, H, G,$ and $B$ are the amount of woody, grass, grazers, and browsers biomass respectively. Also, $r_W$ and $r_H$ represent the growth rates, $K_W$ and $K_H$ represent the carrying capacities, and $c_W$ and $c_H$ denote the consumption
coefficients of trees and grasses respectively. Additionally, $e_W$ and $e_H$ are the conversion efficiency rates of woody and grass biomass by browsers and grazers, while $a_B$ and $a_G$ stand for the intraspecific competition coefficients of browsers and grazers. All parameters above are assumed to be positive.

Let us specifically comment on migration effects and relative interspecific competition coefficients $\kappa_1$ and $\kappa_2$. First, $d_G$ and $d_B$, caused by migration, are the diffusion rates of grazers and browsers respectively, and $d_W$ and $d_H$ are (seed) dispersal rates of trees and herbaceous plants respectively. Second, the positive parameters $\kappa_1$ and $\kappa_2$ are referred to as the relative interspecific competition coefficients. 
If we neglect migration effects and assume $\kappa_1=0$ and $\kappa_2=1$, then \eqref{o W} is reduced to the model in \cite{klimasara2023model} (see also \cite{beckage2011grass}). More general cases, $\kappa_1>0$ and $\kappa_2>0$, are explored in this article (Theorem \ref{main 2.1}). 

We are interested in what would happen if grazers and browsers are added into a purely competitive environment among plants and whether trees, grass, grazers, and browsers can coexist. Therefore, we focus on the co-existence steady state of four species (which is unique) in the proposed savanna system \eqref{o W} and study related traveling wave solutions, which may provide important information about the transition dynamics between the co-existence state and other steady states. We demonstrate the existence of two types of traveling waves, the waves transitioning
from the extinction state to the co-existence state and the waves from a grass-vegetation state (where only grass and grazers exist) to the co-existence state. Our results indicate that under suitable conditions, the co-existence state of four species may be dominant or more stable compared to other steady states in the sense of wave dynamics. To obtain our results, we employ the upper-lower solution method widely used in related studies. However, due to the absence of linear reaction terms in our herbivore equations for $G$ and $B$, the usual form of exponential functions for the lower solutions does not work in our case. To overcome this difficulty, we introduce Gaussian-like functions at one end of the real line domain and successfully find suitable lower solutions for our problem. 

In ecology, how to estimate the total biomass of the organisms in a system is an important issue. For the waves connecting the grass-vegetation state and the co-existence state of four species, since the total mass of the plants (grass and wood) are positive at infinities, it is interesting to ask whether the total plant mass has a lower bound which only depends on the parameters of the equations. Applying an N-barrier maximum (see \cite{chen2016maximum} for instance), we are able to obtain such a lower bound for plants via a new construction of the “barrier”. This result is stated in Theorem \ref{main thm 3}. 

The paper is divided as follows. We first propose a Savanna model \eqref{o W} with the effects of grazing, browsing, and migration. In Section \ref{Sec 2}, we give a basic setup for this reaction-diffusion system and state the results. In Sections \ref{sec 3} and \ref{section 4}, we prove the existence of traveling waves, which relies primarily on the construction of upper and lower solutions and the application of the Schauder fixed-point theorem. As mentioned, due to the absence of linear reaction terms in the herbivore dynamics, the main result cannot be fully appreciated until the Gaussian-like functions are introduced. Additionally, we use the shrinking box argument to characterize the asymptotic behavior of the traveling wave solutions. This shows the existence of smooth positive traveling wave solutions connecting extinction to co-existence states (Theorem \ref{main 2.1}). In Section \ref{sec 5}, we neglect the effect of wildfire, $\kappa_1=0$, (see \cite{klimasara2023model} for more biological reasons). Such an assumption allows us to first solve for $(W,B)$, and then solve for $(H,G)$. We show the existence of smooth positive traveling wave solutions connecting grass-vegetation to co-existence states (Theorem \ref{main 2.2}). Also, we estimate the total biomass of plants in Theorem \ref{main thm 3}.

\section{Traveling waves of savanna dynamics} \label{Sec 2}
Let us now introduce the dimensionless
variables
\begin{align*}
    \hat{W}=\frac{W}{K_W}, ~\hat{H}=\frac{H}{K_H},~\hat{G}=\frac{a_G G}{e_H K_H},~\hat{B}=\frac{a_B B}{e_W K_W},
\end{align*}
and
\begin{align*}
    \hat{c}_W=\frac{c_W e_W K_W}{a_B},~\hat{c}_H=\frac{c_H e_H K_H}{a_G},~\hat{e}_H=e_H K_H,~\hat{e}_W=e_W K_W.
\end{align*}
For notational simplicity, let us drop the $\hat{\cdot}$ and rewrite \eqref{o W} into the following dimensionless system:
\begin{equation} \label{W}
    \begin{aligned}
    W_t&=d_W W_{xx}+r_W W(1-W-\kappa_1 H)-c_W BW,\\ 
    H_t&=d_H H_{xx}+r_H H(1-H-\kappa_2 W)-c_H GH,\\ 
    G_t&=d_G G_{xx}+e_H G(H-G),\\ 
    B_t&=d_B B_{xx}+e_W B(W-B).
\end{aligned}
\end{equation}

In the spatially homogeneous case, a straightforward computation reveals that the only steady state of co-existence (strictly positive constant solutions) in the system \eqref{W} is
\begin{equation} \label{co existence}
    \begin{aligned}
        &W^*=\frac{r_W(r_H+c_H)-r_W r_H \kappa_1}{(r_W+c_W)(r_H+c_H)-r_W r_H  \kappa_1\kappa_2},\\
        &H^*=\frac{r_H(r_W+c_W)-r_W r_H \kappa_2}{(r_W+c_W)(r_H+c_H)-r_W r_H  \kappa_1\kappa_2},\\
        &G^*=\frac{r_H(r_W+c_W)-r_W r_H \kappa_2}{(r_W+c_W)(r_H+c_H)-r_W r_H  \kappa_1\kappa_2},\\
        &B^*=\frac{r_W(r_H+c_H)-r_W r_H \kappa_1}{(r_W+c_W)(r_H+c_H)-r_W r_H  \kappa_1\kappa_2}.
    \end{aligned}
\end{equation}

In this article, we assume that the competition between $W$ and $H$ is weak, i.e., $\kappa_1<1$ and $\kappa_2<1$, which ensures that \eqref{co existence} is well-defined with a positive solution.

Another biologically significant state is the grass vegetation where trees and browsers die out but herbaceous plants and grazers survive:
\begin{align*}
    (W_*,H_*,G_*,B_*)=(0,\frac{r_H}{r_H+c_H},\frac{r_H}{r_H+c_H},0).
\end{align*}
One interesting phenomenon is the emergence of traveling wave solutions in the system \eqref{W}, which represent waves of persistent configuration that appear to be independent of time when observed from a moving reference frame. In particular, they demonstrate the ansatz:
\begin{align} \notag
W(x,t)&=W^s(x+st),~H(x,t)=H^s(x+st),
\end{align}
and
\begin{align} \notag
G(x,t)&=G^s(x+st),~~~~~~~B(x,t)=B^s(x+st),
\end{align}
for some traveling wave profile $(W^s,H^s,G^s,B^s)$ and wave speed $s\in\mathbb{R}$.

In this article, we consider two types of traveling wave solutions. First, we are interested in
the existence of the traveling wave solutions of system \eqref{W}, connecting the extinction state $(0,0,0,0)$ and co-existence state $(W^*,H^*,G^*,B^*)$. To this end, denoting $\xi=x+st$ in $\mathbb{R}$, we study the solutions of 
\begin{equation} \label{s W}
    \begin{aligned} 
    d_W W^s_{\xi\xi}&-sW^s_{\xi}+r_W W^s(1-W^s-\kappa_1 H^s)-c_W B^sW^s=0,\\ 
    d_H H^s_{\xi\xi}&-sH^s_{\xi}+r_H H^s(1-H^s-\kappa_2 W^s)-c_H G^sH^s=0,\\ 
    d_G G^s_{\xi\xi}&-sG^s_{\xi}+e_H G^s(H^s-G^s)=0,\\ 
    d_B B^s_{\xi\xi}&-sB^s_{\xi}+e_W B^s(W^s-B^s)=0,
\end{aligned}
\end{equation}
with asymptotic behaviors
\begin{align} \label{limit left}
    \lim\limits_{\xi\rightarrow -\infty}(W^s,H^s,G^s,B^s)=(0,0,0,0),
\end{align}
and
\begin{align}  \label{limit right}
    \lim\limits_{\xi\rightarrow +\infty}(W^s,H^s,G^s,B^s)=(W^*,H^*,G^*,B^*).
\end{align}

Second, we show existence of the traveling wave solutions of system \eqref{W}, connecting the state $(0,H_*,G_*,0)$ and co-existence state $(W^*,H^*,G^*,B^*)$. In other words, the solutions satisfy \eqref{s W} with the asymptotic behaviors \eqref{limit right} and 
\begin{align} \label{coe to coe}
    \lim_{\xi\rightarrow-\infty}(W^s,H^s,G^s,B^s)=(0,H_*,G_*,0).
\end{align}

Let 
\begin{align*} 
    s_m=\max\left\{2\sqrt{d_W r_W}, 2\sqrt{d_H r_H}\right\},
\end{align*}

\begin{align*} 
    s_{M_1}=\min\left\{d_B \sqrt{\frac{r_W}{d_B-d_W}}, 2\sqrt{d_B e_W\left(1-\kappa_1-\frac{c_W}{r_W}\right)}\right\},
\end{align*}
and
\begin{align*}
   s_{M_2}=\min\left\{d_G \sqrt{\frac{r_H}{d_G-d_H}},2\sqrt{d_G e_H\left(1-\kappa_2-\frac{c_H}{r_H}\right)}\right\}.
\end{align*}
Our main theorems are outlined as follows.

\begin{theorem} \label{main 2.1} Assume $d_B>2 d_W>0$, $d_G>2 d_H>0$, $1-\kappa_1-\frac{c_W}{r_W}>0$, $1-\kappa_2-\frac{c_H}{r_H}>0$ and $s_m<\min\{s_{M_1},s_{M_2}\}$. Let $s$ be a constant satisfying
\begin{align*} 
    s_m<s<\min\{s_{M_1},s_{M_2}\}.
\end{align*}
Then \eqref{W} admits a positive traveling wave solution $(W^s,H^s,G^s,B^s)\in C^{\infty}(\mathbb{R},\mathbb{R}^4)$ satisfying \eqref{limit left} and \eqref{limit right}.
\end{theorem}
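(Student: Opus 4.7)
The plan is to construct a solution of the wave-profile system \eqref{s W} satisfying \eqref{limit left}--\eqref{limit right} by the upper- and lower-solution method combined with Schauder's fixed point theorem on a truncated interval $[-L,L]$, followed by a passage to the limit $L\to\infty$ and a characterization of the right endpoint via a shrinking-rectangle argument. Concretely, I would rewrite \eqref{s W} as a fixed point of an integral operator obtained by adding a sufficiently large multiple $\beta(W^s,H^s,G^s,B^s)$ to each equation (so that the modified reactions become monotone in their own argument) and then inverting the resulting second-order linear operator on $[-L,L]$ with boundary data agreeing with the upper and lower solutions at $\xi=\pm L$.

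The first ingredient is a pair of upper and lower solutions for the plant components. Since $s>s_m$, the characteristic equations $d_W\lambda^2-s\lambda+r_W=0$ and $d_H\lambda^2-s\lambda+r_H=0$ each have two positive real roots, and the smaller roots $\lambda_1^W,\lambda_1^H$ govern the exponential decay at $-\infty$. This yields standard KPP-type upper solutions $\bar W(\xi)=\min\{W^*,e^{\lambda_1^W\xi}\}$ and $\bar H(\xi)=\min\{H^*,e^{\lambda_1^H\xi}\}$, together with Lin--Li--Ruan-type lower solutions $\underline W(\xi)=\max\{0,\,e^{\lambda_1^W\xi}-q_W e^{\eta\lambda_1^W\xi}\}$ (with $\eta>1$ close to $1$ and $q_W$ large) and analogously for $\underline H$. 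The weak-competition hypotheses $1-\kappa_1-c_W/r_W>0$ and $1-\kappa_2-c_H/r_H>0$ are precisely what is needed to absorb the coupling terms $\kappa_1 H$, $c_W B$ and $\kappa_2 W$, $c_H G$ into the sub-solution inequalities for $W$ and $H$ near $-\infty$.

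The non-standard step, flagged by the authors in the introduction, is the construction of lower solutions for $G^s$ and $B^s$. Because the reactions $e_H G(H-G)$ and $e_W B(W-B)$ vanish to second order at the trivial state, no exponential ansatz produces a positive lower solution. Following the stated strategy, I would take Gaussian-like lower solutions of the form
\begin{align*}
\underline B(\xi)=\varepsilon\, e^{\mu\xi-\nu\xi^2}
\end{align*}
on a suitable window (and zero outside), and analogously for $\underline G$. A direct computation gives
\begin{align*}
d_B\underline B''-s\underline B'=\underline B\bigl[d_B(\mu-2\nu\xi)^2-2d_B\nu-s(\mu-2\nu\xi)\bigr],
\end{align*}
so that the quadratic Gaussian factor contributes an extra negative term $-2d_B\nu$, which plays the role of the missing linear reaction and allows the sub-solution inequality for $\underline B$ to be satisfied provided $\mu$ is matched to the $W$-decay rate $\lambda_1^W$ in such a way that $e_W\underline B\,\underline W$ dominates $e_W\underline B^2$. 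The feasibility of this matching is exactly where $d_B>2d_W$ and $s<d_B\sqrt{r_W/(d_B-d_W)}$ enter, while the complementary bound $s<2\sqrt{d_B e_W(1-\kappa_1-c_W/r_W)}$ controls the picture near $+\infty$ at the coexistence state; an entirely analogous construction for $\underline G$ uses $d_G>2d_H$ and $s<s_{M_2}$. Tuning $(\varepsilon,\mu,\nu)$ and the cutoff points so that all four sub-solution inequalities hold simultaneously, together with the upper solutions, will be the genuine technical obstacle.

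With all upper and lower solutions in hand, Schauder's theorem produces profiles $(W_L,H_L,G_L,B_L)$ on $[-L,L]$ sandwiched between them, and uniform interior Schauder estimates furnish $C^{2,\alpha}_{\mathrm{loc}}$ compactness; a diagonal subsequence then converges on $\mathbb{R}$ to a classical solution of \eqref{s W}, which a standard bootstrap upgrades to $C^\infty$. The left-hand limit \eqref{limit left} is automatic because all four upper solutions vanish at $-\infty$. For the right-hand limit \eqref{limit right} I would use the shrinking-rectangle argument: any $\omega$-limit value of $(W^s,H^s,G^s,B^s)$ at $+\infty$ is forced to be a constant steady state of the kinetic ODE system lying in the strictly positive box cut out by the limits of the upper and lower solutions (the Gaussian-like lower solutions supply $\liminf_{\xi\to+\infty}G^s,B^s>0$), and since $(W^*,H^*,G^*,B^*)$ is the unique positive equilibrium of that system, the convergence \eqref{limit right} follows.
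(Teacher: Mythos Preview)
Your overall architecture---upper/lower solutions, Schauder, then a shrinking-box argument at $+\infty$---is exactly what the paper does. Two specific choices in your construction, however, do not work as written.

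First, the upper solutions $\bar W=\min\{W^*,e^{\lambda_1^W\xi}\}$ and $\bar H=\min\{H^*,e^{\lambda_1^H\xi}\}$ fail the super-solution inequalities on the constant part. The upper-$W$ inequality reads
\[
d_W\bar W''-s\bar W'+r_W\bar W(1-\bar W-\kappa_1\underline H)-c_W\underline B\,\bar W\le 0,
\]
and with $\bar W\equiv W^*$ this becomes $r_W(1-W^*)\le r_W\kappa_1\underline H+c_W\underline B$. But at the equilibrium $r_W(1-W^*)=r_W\kappa_1 H^*+c_WB^*$, while $\underline H<H^*$ and $\underline B<B^*$, so the inequality goes the wrong way. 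The paper avoids this by taking the upper constant equal to $1$ for all four components; then $1-\bar W\le 0$ kills the reaction outright. Incidentally, the bound $s<d_B\sqrt{r_W/(d_B-d_W)}$ (together with $d_B>2d_W$) is used in the paper not for the Gaussian lower solution but precisely to make $\lambda_B:=s/d_B\le\lambda_1^W$, which is what the \emph{upper}-$B$ inequality $\bar W\le\bar B$ on $\xi\le 0$ requires.

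Second, all four of your lower solutions vanish as $\xi\to+\infty$: the plant lower solutions $\max\{0,e^{\lambda\xi}-qe^{\eta\lambda\xi}\}$ are zero for large $\xi$, and your Gaussian $\varepsilon e^{\mu\xi-\nu\xi^2}$ (or its truncation) also tends to $0$. So the claim that ``the Gaussian-like lower solutions supply $\liminf_{\xi\to+\infty}G^s,B^s>0$'' is not true as stated, and the shrinking-box argument has nothing to shrink from on the lower side. The paper's construction is different in exactly this respect: $\underline W,\underline H$ are set equal to positive constants $\varpi,h$ for $\xi\ge\xi_W,\xi_H$, and $\underline B,\underline G$ are half-Gaussians $be^{-\beta(\xi-\xi_B)^2}$ for $\xi\le\xi_B$ glued to the constants $b,g$ for $\xi\ge\xi_B$, with $\xi_B$ taken large so that the bulk of the Gaussian sits over the plateau $\underline W\equiv\varpi$. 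The condition $s<2\sqrt{d_Be_W(1-\kappa_1-c_W/r_W)}$ is then exactly what makes $-2d_B\beta-\tfrac{s^2}{4d_B}+e_W(\varpi-b)>0$ feasible on that plateau. With these lower solutions the box $[\varpi,1]\times[h,1]\times[g,1]\times[b,1]$ is the honest starting rectangle for the shrinking argument.

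A minor methodological difference: the paper works directly on $\mathbb{R}$ in the weighted space $X_\sigma=\{\phi:\sup_\xi|\phi(\xi)|e^{-\sigma|\xi|}<\infty\}$ and proves compactness of the integral operator there, rather than truncating to $[-L,L]$ and passing to the limit. Either route is standard; your truncation scheme would also work once the upper/lower solutions are corrected.
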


\begin{theorem} \label{main 2.2} Assume 
$\kappa_1=0$, $d_B>2d_W>0$, $1-\frac{c_W}{r_W}>0$, $1-\kappa_2-\frac{c_H}{r_H}>0$ and $s_m<s_{M_1}$. Let $s$ be a constant satisfying
\begin{align*} 
    s_m<s<s_{M_1}.
\end{align*}
Then \eqref{W} admits a positive traveling wave solution $(W^s,H^s,G^s,B^s)\in C^{\infty}(\mathbb{R},\mathbb{R}^4)$ satisfying \eqref{coe to coe} and \eqref{limit right}.
\end{theorem}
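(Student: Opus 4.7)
The plan is to exploit the decoupling that arises when $\kappa_1 = 0$: the equations for $W^s$ and $B^s$ in \eqref{s W} then involve neither $H^s$ nor $G^s$, so the four-component problem reduces to two two-component problems solved in sequence. In Stage~1 I solve the closed $(W^s, B^s)$-subsystem with limits $(0,0)$ at $-\infty$ and $(W^*, B^*) = (r_W/(r_W+c_W), r_W/(r_W+c_W))$ at $+\infty$. In Stage~2 I substitute the resulting $W^s(\xi)$, treated as a prescribed driving coefficient, into the $(H^s, G^s)$-equations and solve for a wave with limits $(H_*, G_*)$ at $-\infty$ and $(H^*, G^*)$ at $+\infty$.

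Stage~1 is essentially the $(W,B)$-part of the proof of Theorem~\ref{main 2.1}. The conditions $d_B > 2d_W$, $1 - c_W/r_W > 0$, and $s_m < s < s_{M_1}$ imposed in Theorem~\ref{main 2.2} are precisely those required to run that construction for the $(W,B)$-pair alone: exponential upper solutions tied to the characteristic roots of the linearizations at $-\infty$ (the positive roots of $d_W \lambda^2 - s\lambda + r_W = 0$ and $d_B \lambda^2 - s\lambda + e_W W^* = 0$), together with a Gaussian-like lower solution for $B^s$ to circumvent the absence of a linear reaction term in the browser equation. Schauder's fixed point theorem and the shrinking-box argument of Sections~\ref{sec 3}--\ref{section 4} then produce $(W^s, B^s) \in C^\infty(\mathbb{R}, \mathbb{R}^2)$ with the correct asymptotics; it will also be important to record the quantitative exponential decay rate of $W^s$ at $-\infty$ for use in Stage~2.

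For Stage~2, with $W^s$ fixed the $(H,G)$-subsystem reads
\begin{align*}
d_H H^s_{\xi\xi} - sH^s_\xi + r_H H^s\bigl(1 - H^s - \kappa_2 W^s(\xi)\bigr) - c_H G^s H^s &= 0,\\
d_G G^s_{\xi\xi} - sG^s_\xi + e_H G^s(H^s - G^s) &= 0.
\end{align*}
The limiting autonomous systems at $\pm\infty$ share the structure of the $(W,B)$-subsystem, but now both endpoints are strictly positive equilibria: $(H_*, G_*)$ on the left and $(H^*, G^*)$ on the right, with $H^* < H_*$ and $G^* < G_*$ because $\kappa_2 > 0$. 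I construct upper solutions that coincide with $H_*$ and $G_*$ on the far left and decrease smoothly toward $H^*$ and $G^*$ on the right, and lower solutions that sit slightly below $H_*$ and $G_*$ on the left with an exponential correction matched to the linearization at $-\infty$ and that approach $H^*$ and $G^*$ from below on the right; the Gaussian-like device of Theorem~\ref{main 2.1} is reused to build the lower solution for $G^s$. The bound $s > 2\sqrt{d_H r_H}$ built into $s_m$ keeps the relevant characteristic roots real. A Schauder fixed-point argument then produces $(H^s, G^s)$, and the shrinking-box argument delivers the asymptotic limits \eqref{coe to coe} and \eqref{limit right}.

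The main obstacle will be Stage~2. The difficulty is that both endpoints are positive equilibria, so the usual exponential decay to zero at $-\infty$ does not apply: one must instead control small perturbations of $(H_*, G_*)$, and the admissible size of those perturbations is governed by the exponential decay rate of $W^s(\xi)$ produced in Stage~1. Matching these rates, while simultaneously compensating for the missing linear reaction term in the $G$-equation through a Gaussian-like lower solution, is the technical heart of the argument.
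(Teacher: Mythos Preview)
Your two-stage plan is viable in principle, but the paper takes a markedly simpler route that sidesteps the difficulty you flag in Stage~2 entirely.

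Rather than solving $(W,B)$ first and then treating $(H,G)$ as a non-autonomous problem driven by $W^s$, the paper runs a \emph{single} Schauder fixed-point argument on the full four-component system, exactly as in Section~\ref{section 4}. The key observation (Lemma~\ref{Lemma coe to coe}) is that, with $\kappa_1=0$, one can take \emph{constant} upper and lower solutions for $H$ and $G$: namely $(\overline{H},\overline{G})=(1,1)$ and $(\underline{H},\underline{G})=(h,h)$ with $0<h<1-\kappa_2-c_H/r_H$. These constants satisfy \eqref{upper H}--\eqref{lower G} directly; for instance, plugging $\underline{H}=h$ into \eqref{lower H} gives $r_H h(1-h-\kappa_2\overline W)-c_H\overline G\,h\ge r_H h\bigl(1-h-\kappa_2-c_H/r_H\bigr)>0$. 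Meanwhile the $(W,B)$ upper/lower solutions \eqref{overline W}, \eqref{overline B}, \eqref{underline W}, \eqref{underline B} are reused verbatim from Lemma~\ref{Lemma UL WB}, since $\kappa_1=0$ removes all dependence of \eqref{upper W}--\eqref{lower W} on $H$. No Gaussian lower solution for $G$ is needed at all.

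The constant bounds $h\le H,G\le 1$ of course do not pin down the limits at either end. The paper recovers \emph{both} the left limit $(0,H_*,G_*,0)$ and the right limit $(W^*,H^*,G^*,B^*)$ by running the shrinking-box argument of Lemma~\ref{shrinking box lemma} at $+\infty$ and an analogous one at $-\infty$ (where $W,B\to 0$ already by squeezing, reducing the $-\infty$ box to the $(H,G)$ pair). This is precisely what your Stage~2 construction---matching decay rates of perturbations of $(H_*,G_*)$ to the exponential decay of $W^s$, plus a Gaussian device for $G$---was designed to achieve, and the paper's approach renders all of that unnecessary. What your route would buy is tighter a priori control of the profile near $-\infty$; what the paper buys is a nearly trivial upper/lower-solution construction at the cost of deferring the asymptotics entirely to the shrinking-box step.
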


We pause to remake that biologically realistic is that the dispersal of plants (seed dispersal) can occur through wind, water, or even by animals. Therefore, the dispersal coefficients, $d_W$ and $d_H$, may not be large but are usually non-zero. Also, since grazers and browsers ``move" relatively faster than plants, it is reasonable to propose $d_B>2d_W>0 ~\mbox{or}~d_G>2d_H>0$, which implies $d_B \sqrt{r_W/(d_B-d_W)}>2\sqrt{d_W r_W}$ or $d_G \sqrt{r_H/(d_G-d_H)}>2\sqrt{d_H r_H}$.

Our next goal is to present a quantitative analysis of the solutions in Theorem \ref{main 2.2}. This estimation helps us to understand the total biomass in the process of establishing a new steady state.

In biology (see, for instance, \cite{klimasara2023model, gordon2019ecology}), system \eqref{s W} serves as a model for savanna vegetation ecology (the food sources for livestock raised by humans) where the total amount of resources represents a significant biomass. In Theorem \ref{main 2.2}, we have already described the transition process of establishing a new steady state: from a purely herbaceous ecosystem to an ecosystem with coexisting herbs and woody plants. We now proceed to characterize a non-trivial lower bound for the total resources within this system. Let 
\begin{align} \label{q p}
    q(\xi)=c_1 W(\xi)+c_2 H(\xi)~~\mbox{and}~~p(\xi)=\frac{c_1}{d_W} W(\xi)+\frac{c_2}{d_H} H(\xi).
\end{align}
In this context, $c_1$ and $c_2$ represent the arithmetic weights of trees and forage (grass), respectively. For example, $c_1$ and $c_2$ could represent the price per unit weight of trees and forage. If $c_1 = c_2 = 1$, we can also interpret $q$ as the total weight of all plants.

The quantity $q$ can be estimated as follows.
\begin{theorem} \label{main thm 3}
     Given $c_1,c_2>0$, suppose that $s>0$, $(W,H,G,B)\in C^{2}(\mathbb{R},\mathbb{R}^4)$ is positive and satisfies the traveling wave system \eqref{s W} with asymptotic behaviors \eqref{coe to coe} and \eqref{limit right}. Then we have, for $\xi\in\mathbb{R}$, 
     \begin{align} \label{minimum of p}
         q(\xi)\geq d\min\left\{c_1 \tilde{W},c_2\tilde{H}\right\},
     \end{align}
where
\begin{align} \label{tilde W H}
    d=\min\left\{\frac{d_W}{d_H},\frac{d_H}{d_W}\right\},~\tilde{H}=\min
    \left\{1-\frac{c_H}{r_H},\frac{1}{\kappa_1}\left(1-\frac{c_W}{r_W}\right)\right\},
\end{align}  
and
\begin{align}
    \tilde{W}=\min\left\{1-\frac{c_W}{r_W},\frac{1}{\kappa_2}\left(1-\frac{c_H}{r_H}\right)\right\}.
\end{align}
\end{theorem}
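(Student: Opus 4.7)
The plan is to establish Theorem~\ref{main thm 3} by adapting the N-barrier maximum principle of \cite{chen2016maximum} to the four-species setting, with a modified barrier that accommodates the herbivore terms $c_W B$ and $c_H G$ absent from $q$ and $p$.

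First, I derive the master identity. Multiplying the $W$-equation of \eqref{s W} by $c_1/d_W$ and the $H$-equation by $c_2/d_H$ and summing yields
\begin{equation*}
q_{\xi\xi} - s\, p_\xi + F = 0,
\end{equation*}
where
\begin{equation*}
F = \frac{c_1 r_W}{d_W} W\!\left(1 - W - \kappa_1 H - \tfrac{c_W}{r_W} B\right) + \frac{c_2 r_H}{d_H} H\!\left(1 - H - \kappa_2 W - \tfrac{c_H}{r_H} G\right).
\end{equation*}
The elementary chain $H_* = r_H/(r_H + c_H) \geq 1 - c_H/r_H \geq \tilde H$, together with its analogue for $W^*$, guarantees that both $q(-\infty) = c_2 H_*$ and $q(+\infty) = c_1 W^* + c_2 H^*$ strictly exceed $\alpha^* := d\min\{c_1 \tilde W, c_2 \tilde H\}$.

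Second, I argue by contradiction. Assume $q(\xi^*) < \alpha^*$ for some $\xi^*$. The asymptotic inequalities above force the connected component $(\xi_1, \xi_2)$ of $\{q < \alpha^*\}$ containing $\xi^*$ to be bounded, with $q(\xi_1) = q(\xi_2) = \alpha^*$ and $q_\xi(\xi_1) \leq 0 \leq q_\xi(\xi_2)$. On $(\xi_1, \xi_2)$ the bound $q \leq \alpha^*$ forces $W \leq d\tilde W$ and $H \leq d\tilde H$. Using the defining inequalities $\tilde W \leq 1 - c_W/r_W$, $\kappa_2 \tilde W \leq 1 - c_H/r_H$, $\tilde H \leq 1 - c_H/r_H$, $\kappa_1 \tilde H \leq 1 - c_W/r_W$, together with the elementary comparison bounds $B \leq \sup W$ and $G \leq \sup H$ obtained by applying the maximum principle to the last two equations of \eqref{s W}, I aim to show $F > 0$ throughout $(\xi_1, \xi_2)$. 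Integrating the master identity over $[\xi_1, \xi_2]$ then gives
\begin{equation*}
q_\xi(\xi_2) - q_\xi(\xi_1) - s\bigl(p(\xi_2) - p(\xi_1)\bigr) = -\int_{\xi_1}^{\xi_2} F\, d\xi < 0;
\end{equation*}
combining with $q_\xi(\xi_2) - q_\xi(\xi_1) \geq 0$ and the ratio bounds $\min\{1/d_W, 1/d_H\}\, q \leq p \leq \max\{1/d_W, 1/d_H\}\, q$ at the endpoints produces the required contradiction.

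The hard part is verifying $F > 0$ on $(\xi_1, \xi_2)$. The naive estimate $1 - W - \kappa_1 H \geq 1 - 2d(1 - c_W/r_W)$ is insufficient when $d$ is close to $1$; the proof must balance the $d$-scaling against sharper control of the browser and grazer terms via coupling with $W$ and $H$. This is precisely the \emph{new construction of the barrier} mentioned in the introduction, and it is also what pins down the coefficient $d = \min\{d_W/d_H, d_H/d_W\}$ in the conclusion.
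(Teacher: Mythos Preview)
Your sketch has a structural gap that the ``ratio bounds'' cannot close. Integrating over the full connected component $(\xi_1,\xi_2)$ of $\{q<\alpha^*\}$ gives
\[
q_\xi(\xi_2)-q_\xi(\xi_1)-s\bigl(p(\xi_2)-p(\xi_1)\bigr)=-\int_{\xi_1}^{\xi_2}F\,d\xi,
\]
and you correctly note $q_\xi(\xi_2)-q_\xi(\xi_1)\ge 0$. But from $q(\xi_1)=q(\xi_2)=\alpha^*$ you get no sign on $p(\xi_2)-p(\xi_1)$: the bounds $\min\{d_W^{-1},d_H^{-1}\}\,q\le p\le\max\{d_W^{-1},d_H^{-1}\}\,q$ only confine both $p(\xi_i)$ to the same interval and are perfectly consistent with $p(\xi_2)>p(\xi_1)$. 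So even granting $\int F>0$, no contradiction follows. This is exactly the obstruction the N-barrier machinery is designed to remove, and it is not a cosmetic point---it is why one barrier on $q$ alone cannot work when $d_W\ne d_H$.

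The paper does not integrate between two $q$-level crossings. It takes $\xi^*$ to be the \emph{global minimum} of $q$ (so $q'(\xi^*)=0$) and integrates one-sidedly to a single crossing $\overline\xi<\xi^*$ of the \emph{outer} $q$-barrier. The essential construction is a triple of nested level sets $Q_{l_3}\subset P_{l_2}\subset Q_{l_1}$ in the $WH$-plane, with $l_3=d\min\{c_1\tilde W,c_2\tilde H\}$, chosen case-by-case in $(c_1\tilde W\gtrless c_2\tilde H,\ d_W\gtrless d_H)$: the inner $q$-barrier forces $p(\xi^*)<l_2$, the outer $q$-barrier forces $p(\overline\xi)\ge l_2$, and this is what produces the strict sign $p(\xi^*)-p(\overline\xi)<0$. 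The positivity of $F$ is then obtained not from your separate bounds $W\le d\tilde W$, $H\le d\tilde H$ (which, as you observe, fail when $d$ is near $1$) but from the \emph{joint} constraint: throughout $[\overline\xi,\xi^*]$ one has $(W,H)\in Q_{l_1}$, and $Q_{l_1}$ sits inside the triangle with vertices $(0,0),(\tilde W,0),(0,\tilde H)$, on which both linear factors $1-W-\kappa_1 H-c_W/r_W$ and $1-H-\kappa_2 W-c_H/r_H$ are nonnegative by the very definition of $\tilde W,\tilde H$; combined with the global bounds $G,B<1$ this gives $F(W,H,G,B)\ge F(W,H,1,1)\ge 0$. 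In short, the ``new barrier'' you allude to is precisely this three-level nesting, and it simultaneously fixes both the $p$-sign problem and the $F\ge 0$ verification---neither of which your symmetric scheme handles.
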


\section{Preliminaries of traveling wave solutions} \label{sec 3}
Throughout this paper, we drop the superscript $s$ for notational simplicity. We present certain properties of system \eqref{s W} with asymptotic behaviors \eqref{limit left} and \eqref{limit right} or \eqref{coe to coe} and \eqref{limit right}. 
\begin{lemma}
    If the quadruple $(W(\xi),H(\xi),G(\xi),B(\xi))$ is a strictly positive solution of \eqref{s W}-\eqref{limit right}, then $0<W(\xi),H(\xi),G(\xi),B(\xi)<1$, for $\xi\in\mathbb{R}$.
\end{lemma}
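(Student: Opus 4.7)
The plan is to apply a one-sided maximum-principle argument component by component. A preliminary observation, used throughout, is that under $\kappa_1, \kappa_2 < 1$ the explicit formulas in \eqref{co existence} give $W^*, H^*, G^*, B^* < 1$; for instance, the denominator of $W^*$ exceeds its numerator by $c_W(r_H + c_H) + r_W r_H \kappa_1(1-\kappa_2) > 0$, and analogously for the others. Combined with the asymptotics \eqref{limit left}--\eqref{limit right} and continuity of the profile, this means that for each component the supremum over $\mathbb{R}$ is either already at most the sub-unity coexistence value, or else is attained at some finite $\xi^\ast \in \mathbb{R}$.

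I would first bound $W$. If $\sup_{\mathbb{R}} W$ is attained at $\xi^\ast \in \mathbb{R}$, then $W_\xi(\xi^\ast) = 0$ and $W_{\xi\xi}(\xi^\ast) \le 0$, so the first equation of \eqref{s W}, after dividing through by $W(\xi^\ast) > 0$, yields
\[
r_W\bigl(1 - W(\xi^\ast)\bigr) \geq r_W \kappa_1 H(\xi^\ast) + c_W B(\xi^\ast) > 0,
\]
where strict positivity of $H,B$ is built into the hypothesis. Hence $W(\xi^\ast) < 1$, and in either case $\sup W < 1$. The argument for $H$ is identical, using the second equation of \eqref{s W}.

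For $G$ and $B$ the reaction terms contain no linear self-limiting piece analogous to $r_W(1-W)$, so the same calculation does not directly deliver an upper bound of $1$. Instead, at an interior maximum $\xi^\ast$ of $G$, the third equation of \eqref{s W} produces $0 \le e_H G(\xi^\ast)(H(\xi^\ast) - G(\xi^\ast))$, so $\sup G \le \sup H < 1$; and the analogous computation with the fourth equation gives $\sup B \le \sup W < 1$. The lower bounds $W,H,G,B > 0$ are assumed in the hypothesis and require no separate verification.

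I do not anticipate a real obstacle. The only mildly delicate point is confirming that, whenever the supremum strictly exceeds the right-endpoint limit, it is actually attained on $\mathbb{R}$ rather than only approached at $\pm\infty$; but this is immediate from continuity together with the two finite limits $0$ and the sub-unity coexistence value. Thus the whole argument reduces to one application of the interior-maximum test per component, executed in the order $W \to H \to G \to B$ so that the bounds for $G$ and $B$ can invoke the already-established bounds for $H$ and $W$.
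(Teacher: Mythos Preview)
Your argument is correct and follows essentially the same maximum-principle route as the paper: the paper argues by contradiction that no component can reach or exceed $1$ at an interior maximum, invoking the strict positivity of the other components in the reaction term. The only refinement you add is to make explicit that the bounds for $G$ and $B$ must be derived \emph{after} those for $H$ and $W$ (since the reaction terms $e_HG(H-G)$ and $e_WB(W-B)$ only force $G\le H$ and $B\le W$ at a maximum), whereas the paper leaves this dependency implicit under the phrase ``following a similar argument.''
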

\begin{proof}
    The lower bound is based on the assumption of positivity. Let us demonstrate that $W(\xi)<1$, for $\xi\in\mathbb{R
    }$. Suppose otherwise; since the right asymptotic behavior of $W$ is strictly less than $1$, there exists a point $\xi_0\in\mathbb{R}$ such that $W(\xi_0)\geq 1$, being a global maximum of $W(\xi)$. Then we have $W_\xi(\xi_0)=0$, $W_{\xi\xi}(\xi_0)\leq 0$, and  
    \begin{align*}
        r_W W(\xi_0)(1-W(\xi_0)-\kappa_1 H(\xi_0))-c_W B(\xi_0)W(\xi_0)<0
    \end{align*}
    since $H(\xi)>0$ and $B(\xi)>0$ for $\xi\in\mathbb{R}$, which lead to a contradiction in the first equation of \eqref{s W}. Following a similar argument, we can prove that $H(\xi),G(\xi)$ and $B(\xi)$ are strictly less than $1$, and the lemma follows.
\end{proof}
We pause to note that this property can also be achieved for traveling wave solutions connecting the state $(0,H_*,G_*,0)$ and co-existence state $(W^*,H^*,G^*,B^*)$ by using a similar argument.

\subsection{Upper and lower solutions}

Following \cite{ma2001traveling} and others, we define the upper and lower solutions as follows.

\begin{definition} \label{UL for WB}
    The functions $(\overline{W},\overline{H},\overline{G},\overline{B})$ and $(\underline{W},\underline{H}, \underline{G},\underline{B})$ are upper and lower solutions of system \eqref{s W} if there exists a $U=\{\xi_1,...,\xi_n\}$ such that
\begin{enumerate}[label=(H\arabic*)]
    \item the functions $\overline{W},\underline{W},\overline{H},\underline{H},\overline{G},\underline{G},\overline{B}$, and $\underline{B}$ are $C^2$ functions in $\mathbb{R}\setminus U$;
    \item the first and second derivatives of $\overline{W},\underline{W},\overline{H},\underline{H},\overline{G},\underline{G},\overline{B}$, and $\underline{B}$ are bounded in $\mathbb{R} \setminus U$;
    \item for $i=1,...,n$, $\overline{W}_{\xi}(\xi^+_i)\leq \overline{W}_{\xi}(\xi^-_i)$, $\overline{H}_{\xi}(\xi^+_i)\leq \overline{H}_{\xi}(\xi^-_i)$, $\overline{G}_{\xi}(\xi^+_i)\leq \overline{G}_{\xi}(\xi^-_i)$, and $\overline{B}_{\xi}(\xi^+_i)\leq \overline{B}_{\xi}(\xi^-_i)$;
    \item for $i=1,...,n$, $\underline{W}_{\xi}(\xi^-_i)\leq \underline{W}_{\xi}(\xi^+_i)$, $\underline{H}_{\xi}(\xi^-_i)\leq \underline{H}_{\xi}(\xi^+_i)$, $\underline{G}_{\xi}(\xi^-_i)\leq \underline{G}_{\xi}(\xi^+_i)$, and $\underline{B}_{\xi}(\xi^-_i)\leq \underline{B}_{\xi}(\xi^+_i)$;
    \item the functions $(\overline{W},\overline{H},\overline{G},\overline{B})$ and $(\underline{W},\underline{H}, \underline{G},\underline{B})$ satisfy
\begin{align} \label{upper W}
    d_W \overline{W}_{\xi\xi}&-s\overline{W}_{\xi}+r_W \overline{W}(1-\overline{W}-\kappa_1 \underline{H})-c_W \underline{B}\overline{W} \leq 0,\\ \label{lower W}
    d_W \underline{W}_{\xi\xi}&-s\underline{W}_{\xi}+r_W \underline{W}(1-\underline{W}-\kappa_1 \overline{H})-c_W \overline{B}\underline{W} \geq 0,\\ \label{upper H}
    d_H \overline{H}_{\xi\xi}&-s\overline{H}_{\xi}+r_H \overline{H}(1-\overline{H}-\kappa_2 \underline{W})-c_H \underline{G}\overline{H} \leq 0,\\ \label{lower H}
    d_H \underline{H}_{\xi\xi}&-s\underline{H}_{\xi}+r_H \underline{H}(1-\underline{H}-\kappa_2 \overline{W})-c_H \overline{G}\underline{H} \geq 0,\\ \label{upper G}
    d_G \overline{G}_{\xi\xi}&-s\overline{G}_{\xi}+e_H \overline{G}(\overline{H}-\overline{G}) \leq 0,\\ \label{lower G}
    d_G \underline{G}_{\xi\xi}&-s\underline{G}_{\xi}+e_H \underline{G}(\underline{H}-\underline{G}) \geq 0,\\
    \label{upper B}
    d_B \overline{B}_{\xi\xi}&-s\overline{B}_{\xi}+e_W \overline{B}(\overline{W}-\overline{B}) \leq 0,\\ \label{lower B}
    d_B \underline{B}_{\xi\xi}&-s\underline{B}_{\xi}+e_W \underline{B}(\underline{W}-\underline{B}) \geq 0,
\end{align}
in the region $\mathbb{R} \setminus U$.
\end{enumerate}
\end{definition}

Our next goal is to construct upper and lower solutions for system \eqref{s W}. To this end, we define constants as follows:
\begin{align} \label{lambda W}
    \lambda_W=\frac{s-\sqrt{s^2-4d_W r_W}}{2d_W},~\eta_W=\frac{s+\sqrt{s^2-4d_W r_W}}{2d_W},
\end{align}
\begin{align} \label{lambda H}
    \lambda_H=\frac{s-\sqrt{s^2-4d_H r_H}}{2d_H},~\eta_H=\frac{s+\sqrt{s^2-4d_H r_H}}{2d_H},
\end{align}
\begin{align} \label{lambda B}
    \lambda_B=\frac{s}{d_B},
\end{align}
and 
\begin{align} \label{lambda G}
    \lambda_G=\frac{s}{d_G}.
\end{align}
We pause to remark that \eqref{lambda W} and \eqref{lambda H} are well defined provided that $s>2\sqrt{d_W r_W}$ and $s>2\sqrt{d_H r_H}$ respectively. We notice that $\lambda_W,\eta_W$ are the roots of the quadratic polynomial $d_W x^2-s x+r_W=0$, $\lambda_H,\eta_H$ are the roots of the quadratic polynomial $d_H x^2-s x+r_H=0$, $\lambda_B$ is a root of the quadratic polynomial $d_B x^2-s x=0$ and $\lambda_G$ is a root of the quadratic polynomial $d_G x^2-s x=0$. 

Considering the assumptions in Theorem \ref{main 2.1}, we choose $\varpi$, $\delta_W$, $\omega$, $\beta$, $b$, $h$, $\delta_H$, $\eta$, $\gamma$, and $g$ satisfy the following conditions:
\begin{align} \label{1-cw/rw}
   &0<\varpi<1-\kappa_1-\frac{c_W}{r_W},\\ \label{delta_w condition}
   &\delta_W\in(1,\min\left\{2,\frac{\eta_W} {\lambda_W},\frac{\lambda_W+\lambda_B}{\lambda_W},\frac{\lambda_W+\lambda_H}{\lambda_W}\right\}),\\ \label{omega condition}
   &\omega>\frac{r_W+r_W\kappa_1+c_W}{-(d_W(\delta_W \lambda_W)^2-s(\delta_W\lambda_W)+r_W)},\\ \label{2.17 condition}
   &-2d_B\beta-\frac{s^2}{4d_B}+e_W(\varpi-b)>0,\\
   \label{1-w-ch/rh}
   &0<h<1-\kappa_2-\frac{c_H}{r_H},\\ \label{delta_h condition}
   &\delta_H\in(1,\min\left\{2,\frac{\eta_H} {\lambda_H},\frac{\lambda_H+\lambda_G}{\lambda_H},\frac{\lambda_H+\lambda_W}{\lambda_H}\right\}),\\ \label{eta condition}
   &\eta>\frac{r_H+r_H\kappa_2+c_H}{-(d_H(\delta_H \lambda_H)^2-s(\delta_H\lambda_H)+r_H)},\\ \label{2.31 condition}
   &-2d_G\gamma-\frac{s^2}{4d_G}+e_H(h-g)>0.
\end{align}

Recall the assumptions in Theorem \ref{main 2.1} and the definitions of $s_{M_1}$ and 
  $s_{M_2}$. We notice that by choosing small enough $\beta$, $b$, $\gamma$, and $g$ and choosing $\varpi$ and $h$ close enough to $1-\kappa_1-\frac{c_W}{r_W}$ and $1-\kappa_2-\frac{c_H}{r_H}$ respectively. One can obtain \eqref{1-cw/rw}, \eqref{2.17 condition}, \eqref{1-w-ch/rh}, and \eqref{2.31 condition}. Also note that the denominators in \eqref{omega condition} and \eqref{eta condition} are positive and it is easy to choose suitable $\delta_W$, 
$\omega$, $\delta_H$, and $\eta$. 

We are ready to introduce the upper functions $(\overline{W},\overline{H},\overline{G},\overline{B})$ and the lower functions $(\underline{W},\underline{H},\underline{G},\underline{B})$ as follows.

\begin{lemma} \label{Lemma UL WB}
    Suppose that the assumptions described in Theorem \ref{main 2.1} hold. For $s_m<s<\min\{s_{M_1},s_{M_2}\}$, if \eqref{1-cw/rw}-\eqref{2.31 condition} hold true, then the functions 
    \begin{equation} \label{overline W}
\overline{W}(\xi)=\left\{ \begin{aligned}
    &1 & \xi\geq 0,\\
    &e^{\lambda_W \xi}& \xi\leq 0,
\end{aligned}
\right.
\end{equation}

\begin{equation} \label{overline H}
\overline{H}(\xi)=\left\{ \begin{aligned}
    &1 & \xi\geq 0,\\
    &e^{\lambda_H \xi}& \xi\leq 0,
\end{aligned}
\right.
\end{equation}

\begin{equation} \label{overline G}
\overline{G}(\xi)=\left\{ \begin{aligned}
    &1 & \xi\geq 0,\\
    &e^{\lambda_G \xi}& \xi\leq 0,
\end{aligned}
\right.
\end{equation}

\begin{equation} \label{overline B}
\overline{B}(\xi)=\left\{ \begin{aligned}
    &1 & \xi\geq 0,\\
    &e^{\lambda_B \xi}& \xi\leq 0,
\end{aligned}
\right.
\end{equation}

\begin{equation} \label{underline W}
\underline{W}(\xi)=\left\{ \begin{aligned}
    &\varpi & \xi\geq \xi_W,\\
    &e^{\lambda_W \xi}-\omega e^{\delta_W\lambda_W \xi}& \xi\leq \xi_W,
\end{aligned}
\right.
\end{equation}

\begin{equation} \label{underline H}
\underline{H}(\xi)=\left\{ \begin{aligned}
    &h & \xi\geq \xi_H,\\
    &e^{\lambda_H \xi}-\eta e^{\delta_H\lambda_H \xi}& \xi\leq \xi_H,
\end{aligned}
\right.
\end{equation}

\begin{equation} \label{underline G}
\underline{G}(\xi)=\left\{ \begin{aligned}
    &g & \xi\geq \xi_G,\\
    &g e^{-\gamma(\xi-\xi_G)^2}& \xi\leq \xi_G,
\end{aligned}
\right.
\end{equation}

\begin{equation} \label{underline B}
\underline{B}(\xi)=\left\{ \begin{aligned}
    &b & \xi\geq \xi_B,\\
    &b e^{-\beta(\xi-\xi_B)^2}& \xi\leq \xi_B,
\end{aligned}
\right.
\end{equation}
satisfy Definition \ref{UL for WB}, where $\xi_W<0<\xi_B$ and $\xi_H<0<\xi_G$ for some $\xi_W,\xi_H$ and some large enough $\xi_B,\xi_G$.
\end{lemma}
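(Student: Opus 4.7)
I verify the five structural conditions (H1)--(H5) in Definition \ref{UL for WB}. Conditions (H1) and (H2) are immediate: each piece is an exponential, a Gaussian, or a constant, hence $C^{\infty}$ on its open interval of definition with uniformly bounded first and second derivatives. For (H3) at $\xi=0$, the left-derivatives of $\overline{W},\overline{H},\overline{G},\overline{B}$ equal $\lambda_W,\lambda_H,\lambda_G,\lambda_B>0$ while the right-derivatives vanish. For (H4) at $\xi_B,\xi_G$ the Gaussian pieces meet the constants $b,g$ in a $C^1$ fashion (the Gaussian derivative at its centre is zero), so (H4) is automatic there; at $\xi_W$ and $\xi_H$ one selects the matching point on the \emph{decreasing} branch of $u\mapsto u-\omega u^{\delta_W}$ (resp.\ $u\mapsto u-\eta u^{\delta_H}$), so that the left-slope is non-positive and (H4) holds.

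\textbf{Upper-solution inequalities \eqref{upper W}--\eqref{upper B}.} For $\overline{W}$ on $\xi\le 0$, substituting $e^{\lambda_W\xi}$ cancels the linear part by the root property of $\lambda_W$, leaving the manifestly non-positive terms $-r_W\overline{W}^2-r_W\kappa_1\overline{W}\,\underline{H}-c_W\underline{B}\,\overline{W}$; on $\xi\ge 0$ derivatives vanish and the reaction term is non-positive since $\overline{W}=1$. The argument for $\overline{H}$ is identical. For $\overline{G}$ on $\xi\le 0$ the linear part cancels because $\lambda_G=s/d_G$ is a root of $d_G x^2-sx$, and the remaining $e_H\overline{G}(\overline{H}-\overline{G})$ is non-positive iff $\lambda_H\ge\lambda_G$; a direct computation shows this is equivalent to $s\le d_G\sqrt{r_H/(d_G-d_H)}$, which is built into $s_{M_2}$. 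The analogous constraint in $s_{M_1}$ handles $\overline{B}$.

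\textbf{Lower solutions $\underline{W}$ and $\underline{H}$ \eqref{lower W}, \eqref{lower H}.} On $\xi\ge\xi_W$, $\underline{W}=\varpi$; using $\overline{H},\overline{B}\le 1$ and dividing by $\varpi>0$, the inequality reduces to $\varpi\le 1-\kappa_1-c_W/r_W$, which is \eqref{1-cw/rw}. On $\xi\le\xi_W$ the $e^{\lambda_W\xi}$ contribution to $d_W\underline{W}_{\xi\xi}-s\underline{W}_\xi+r_W\underline{W}$ cancels, leaving a strictly positive term $-\omega e^{\delta_W\lambda_W\xi}\bigl(d_W(\delta_W\lambda_W)^2-s\delta_W\lambda_W+r_W\bigr)$ because $\delta_W\lambda_W\in(\lambda_W,\eta_W)$ by \eqref{delta_w condition}. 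The nonlinear terms are bounded by $\underline{W}\le e^{\lambda_W\xi}$, $\overline{H}\le e^{\lambda_H\xi}$, $\overline{B}\le e^{\lambda_B\xi}$; after dividing by $e^{\delta_W\lambda_W\xi}$ the exponents $2\lambda_W-\delta_W\lambda_W$, $\lambda_W+\lambda_H-\delta_W\lambda_W$, $\lambda_W+\lambda_B-\delta_W\lambda_W$ are all positive by \eqref{delta_w condition}, hence each exponential is $\le 1$ on $\xi\le 0$, and \eqref{omega condition} exactly closes the inequality. The analysis for $\underline{H}$ is symmetric.

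\textbf{Lower solutions $\underline{B}$, $\underline{G}$ and the main obstacle \eqref{lower B}, \eqref{lower G}.} This is where the exponential ansatz fails, since the $B$-equation has no linear reaction term. On $\xi\ge\xi_B$, $\underline{B}=b$ and $\underline{W}=\varpi$, so \eqref{lower B} becomes $e_W b(\varpi-b)\ge 0$, implied by \eqref{2.17 condition}. On $\xi\le\xi_B$, substituting the Gaussian and completing the square in $\xi-\xi_B$ transforms \eqref{lower B} into
\[
\underline{B}\Bigl[\,4d_B\beta^2\bigl(\xi-\xi_B+\tfrac{s}{4d_B\beta}\bigr)^{2}-\tfrac{s^2}{4d_B}-2d_B\beta+e_W(\underline{W}-\underline{B})\Bigr]\ge 0.
\]
On $[\xi_W,\xi_B]$, $\underline{W}=\varpi$ and $\underline{B}\le b$, so the bracket is $\ge -s^2/(4d_B)-2d_B\beta+e_W(\varpi-b)>0$ by \eqref{2.17 condition}. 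The \emph{main obstacle} is the region $\xi<\xi_W$, where $\underline{W}$ can be arbitrarily small; here I use only $\underline{W}\ge 0$ and $\underline{B}\le b$ and rely on the completed square $4d_B\beta^2(\xi-\xi_B+\tfrac{s}{4d_B\beta})^{2}\ge 4d_B\beta^2(\xi_W-\xi_B+\tfrac{s}{4d_B\beta})^{2}$, which grows like $\xi_B^{2}$ as $\xi_B\to\infty$. Choosing $\xi_B$ large enough thus absorbs the constant $s^{2}/(4d_B)+2d_B\beta+e_W b$ uniformly on $\xi\le\xi_W$, closing the inequality. The analysis for $\underline{G}$ is symmetric, using \eqref{2.31 condition} and $\xi_G$ sufficiently large. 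This large-$\xi_B$/$\xi_G$ argument, forced by the missing linear reaction in the herbivore equations, is the central technical step.
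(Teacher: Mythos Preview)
Your proposal is correct and follows essentially the same route as the paper's proof: the same piecewise exponential/Gaussian ans\"atze, the same use of the characteristic-root identities for the upper solutions (with the constraint $\lambda_W\ge\lambda_B$, $\lambda_H\ge\lambda_G$ encoded in $s_{M_1},s_{M_2}$), the same $e^{\delta_W\lambda_W\xi}$-factoring for $\underline W$, and the same ``large $\xi_B$'' mechanism to close \eqref{lower B} on $\xi<\xi_W$. Two minor differences: you complete the square before bounding on $\xi<\xi_W$ and use the sharper estimate $\underline W\ge 0$, whereas the paper bounds the raw quadratic at $\xi=\xi_W$ and uses the cruder $\underline W\ge 1-\omega$; both lead to the same conclusion that the quadratic term, growing like $(\xi_W-\xi_B)^2$, dominates once $\xi_B$ is large.
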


\begin{proof}
  Clearly, the domain $U$ is chosen by $\{\xi_H, \xi_W, 0, \xi_B,\xi_G\}$ and the proofs of (H1)-(H4) are rudimentary. First, we claim that inequality \eqref{upper W} holds. Combining the facts that 
\begin{align*}
    d_W \lambda^2_W-s \lambda_W+r_W=0,
\end{align*}
and 
\begin{align*}
    \overline{W}>0,~~\underline{H}>0,\mbox{and}~\underline{B}>0,
\end{align*}
we verify inequality \eqref{upper W}. Recalling that $d_B>2d_W>0$ and $s_m<s<s_{M_1}$, we obtain that $e^{\lambda_B \xi}-e^{\lambda_W \xi}\geq 0$, when $\xi\leq 0$. Then we may follow a similar manner of argument to verify inequality \eqref{upper B}. Next, we show inequality \eqref{lower W} holds. When $\xi\geq \xi_W$, recalling \eqref{1-cw/rw}, we obtain
\begin{align*}
    &d_W \underline{W}_{\xi\xi}-s\underline{W}_{\xi}+r_W \underline{W}(1-\underline{W}-\kappa_1 \overline{H})-c_W \overline{B}\underline{W}\\
    &\geq r_W \varpi(1-\varpi-\kappa_1)-c_W \overline{B} \varpi>0.
\end{align*}
Then we confirm the inequality \eqref{lower W}. For the case when $\xi<\xi_W$, we arrive at
\begin{equation} \label{3.1 729}
\begin{aligned}
     d_W& \underline{W}_{\xi\xi}-s\underline{W}_{\xi}+r_W \underline{W}(1-\underline{W}-\kappa_1\overline{H})-c_W \overline{B}\underline{W}\\
     =&\omega e^{\delta_W \lambda_W \xi}\Big(-d_W(\delta_W \lambda_W)^2+s(\delta_W\lambda_W)-r_W\Big)\\
     &-r_W(e^{\lambda_W \xi}-\omega e^{\delta_W\lambda_W \xi})^2-(r_W\kappa_1 e^{\lambda_H \xi}+c_W e^{\lambda_B\xi})(e^{\lambda_W \xi}-\omega e^{\delta_W\lambda_W \xi})\\
     \geq&\omega e^{\delta_W \lambda_W \xi}\Big(-d_W(\delta_W \lambda_W)^2+s(\delta_W\lambda_W)-r_W\Big)\\
     &-r_W e^{2\lambda_W \xi}-r_W\kappa_1 e^{(\lambda_W+\lambda_H)\xi}-c_W e^{(\lambda_W+\lambda_B) \xi}\\
     =&e^{\delta_W \lambda_W \xi}\Big(-\omega(d_W(\delta_W \lambda_W)^2-s(\delta_W\lambda_W)+r_W)\\
     &-r_W e^{(2-\delta_W)\lambda_W \xi}-r_W\kappa_1 e^{((1-\delta_W)\lambda_W+\lambda_H)\xi} -c_We^{((1-\delta_W)\lambda_W+\lambda_B)\xi}\Big)\\
     \geq&e^{\delta_W \lambda_W \xi}\Big(-\omega(d_W(\delta_W \lambda_W)^2-s(\delta_W\lambda_W)+r_W)-r_W-r_W\kappa_1-c_W\Big)\\
     \geq& 0.
\end{aligned}
\end{equation}
We pause to remark on why \eqref{delta_w condition} and \eqref{omega condition} asserts \eqref{3.1 729}. Here the first inequality uses $\eqref{underline W}$ is positive. The second inequality uses 
\begin{align*}
    \delta_W \leq \min\{2,\frac{\lambda_W+\lambda_B}{\lambda_W},\frac{\lambda_W+\lambda_H}{\lambda_W}\}.
\end{align*}
To obtain the third inequality, we use 
\begin{align*}
    \lambda_W<\delta_W \lambda_W<\eta_W,
\end{align*}
and, hence, 
\begin{align*}
    -\left(d_W(\delta_W\lambda_W)^2-s(\delta_W\lambda_W)+r_W\right)>0.
\end{align*}
then for sufficiently large $\omega$ asserts the third inequality. 

Our next step is to vertify inequality \eqref{lower B}. It is obvious that inequality \eqref{lower B} holds when $\xi\geq \xi_B$. When $\xi_W \leq \xi\leq\xi_B$, a straightforward computation achieves that 
\begin{equation}
    \begin{aligned}
        d_B& \underline{B}_{\xi\xi}-s\underline{B}_{\xi}+e_W \underline{B}(\underline{W}-\underline{B})\\
        =&be^{-\beta(\xi-\xi_B)^2}\left(d_B(4\beta^2(\xi-\xi_B)^2-2\beta)+2\beta s(\xi-\xi_B)\right)\\
        &+e_W b e^{-\beta(\xi-\xi_B)^2}\left(\varpi-b e^{-\beta(\xi-\xi_B)^2}\right)\\
        \geq& be^{-\beta(\xi-\xi_B)^2}\left( -2d_B\beta-\frac{s^2}{4d_B}\right)+e_W b e^{-\beta(\xi-\xi_B)^2}\left(\varpi-b e^{-\beta(\xi-\xi_B)^2}\right)\\
        \geq& be^{-\beta(\xi-\xi_B)^2}\left( -2d_B\beta-\frac{s^2}{4d_B}+e_W(\varpi-b)\right)\\
        \geq & 0.
    \end{aligned}
\end{equation}
Here the first inequality holds true because 
\begin{align*}
    d_B(4\beta^2(\xi-\xi_B)^2-2\beta)+2\beta s(\xi-\xi_B)
\end{align*}
attains its minimum when 
\begin{align} \label{xi min}
    \xi=\mathrm{max}\{\xi_B-\frac{s}{4d_B\beta},\xi_W\}=\xi_B-\frac{s}{4d_B \beta}.
\end{align}
The final inequality holds, as provided by \eqref{2.17 condition}. We pause to remark that for a fixed $s$, $d_B$ and $\beta$, one may choose large enough $\xi_B$ such that \eqref{xi min} holds.

Next, let us address the final case $\xi\leq \xi_W$. We could adopt a similar calculation to obtain 
\begin{equation}
    \begin{aligned}
        d_B& \underline{B}_{\xi\xi}-s\underline{B}_{\xi}+e_W \underline{B}(\underline{W}-\underline{B})\\
        =&be^{-\beta(\xi-\xi_B)^2}\left(d_B(4\beta^2(\xi-\xi_B)^2-2\beta)+2\beta s(\xi-\xi_B)\right)\\
        &+e_W b e^{-\beta(\xi-\xi_B)^2}\left(e^{\lambda_W \xi}-\omega e^{\delta_W\lambda_W \xi}-b e^{-\beta(\xi-\xi_B)^2}\right)\\
        \geq &be^{-\beta(\xi-\xi_B)^2}\Big(d_B(4\beta^2(\xi_W-\xi_B)^2-2\beta)+2\beta s(\xi_W-\xi_B)\\
        &+e_W(1-\omega-b)\Big)\\
        \geq& 0.
    \end{aligned}
\end{equation}
Here we choose sufficiently large $\xi_B$ to assert the last inequality. 

Following a similar idea, one may show that \eqref{upper H}, \eqref{lower H}, \eqref{upper G}, and \eqref{lower G} hold. This completes the proof of Lemma \ref{Lemma UL WB}.
\end{proof}

\subsection{Asymptotic behaviors}
Let $(W,H,G,B)$ denote a solution of \eqref{s W}, whose existence will be given later. We are interested in its asymptotic behavior. Our strategy is based on the shrinking box argument. For more details of this kind argument, please see, for instance, \cite{chen2017traveling,ChenHsiaoWang2024}.

\begin{lemma} \label{shrinking box lemma}
    Recall \eqref{overline W}-\eqref{underline B}. Assume that the solution $(W,H,G,B)$ of the system \eqref{s W} satisfies $(\underline{W},\underline{H},\underline{G},\underline{B})\leq (W,H,G,B) \leq (\overline{W},\overline{H},\overline{G},\overline{B})$. Then the solution has asymptotic behaviors \eqref{limit left} and \eqref{limit right}.
\end{lemma}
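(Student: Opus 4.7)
The plan separates the two endpoints. The left-hand asymptotics \eqref{limit left} are essentially free: every component of $(\overline{W},\overline{H},\overline{G},\overline{B})$ is a simple exponential on $\xi\leq 0$ that decays to $0$, while the four lower functions $(\underline{W},\underline{H},\underline{G},\underline{B})$ likewise decay to $0$ (the first two are differences of exponentials and the last two are Gaussian-like). A direct sandwich therefore gives $(W,H,G,B)(\xi)\to(0,0,0,0)$ as $\xi\to-\infty$.

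The right-hand limit \eqref{limit right} is the substantive step. For $\xi$ sufficiently large the sandwich yields strictly positive lower bounds $\varpi,h,g,b$ and the uniform upper bound $1$, so the eight liminf/limsup values of the components at $+\infty$, denoted $W^\pm,H^\pm,G^\pm,B^\pm$, are well defined, strictly positive, and bounded above by $1$. Standard ODE regularity applied to \eqref{s W} together with this $L^\infty$ bound yields uniform $C^2$ bounds on the solution, so for any sequence $\xi_n\to+\infty$ the translates $(W,H,G,B)(\cdot+\xi_n)$ admit a $C^2_{\mathrm{loc}}(\mathbb{R})$-convergent subsequence whose limit is an entire bounded solution of \eqref{s W} with components in $[W^-,W^+]$, $[H^-,H^+]$, $[G^-,G^+]$, $[B^-,B^+]$. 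This is the shrinking-box setup from \cite{chen2017traveling,ChenHsiaoWang2024}.

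Choosing $\xi_n$ so that $W(\xi_n)\to W^+$, the limit profile $(\tilde W,\tilde H,\tilde G,\tilde B)$ attains a global maximum of $\tilde W$ at $\xi=0$, so $\tilde W_\xi(0)=0$ and $\tilde W_{\xi\xi}(0)\leq 0$; evaluating the first equation of \eqref{s W} at $\xi=0$ and using $\tilde H(0)\geq H^-$, $\tilde B(0)\geq B^-$ gives $W^++\kappa_1 H^- + (c_W/r_W)B^-\leq 1$. The mirror sequence realizing $W^-$ yields $W^-+\kappa_1 H^+ + (c_W/r_W)B^+\geq 1$; repeating the argument on the $H$-equation produces a matching pair with $\kappa_2,c_H/r_H$, and the same max/min principle on the much simpler third and fourth equations of \eqref{s W} gives the comparisons $H^-\leq G^-\leq G^+\leq H^+$ and $W^-\leq B^-\leq B^+\leq W^+$.

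Subtracting within each of the $W,H$ pairs and substituting the $G,B$ comparisons reduces the system to
\[
\kappa_1(H^+-H^-)\geq \bigl(1-\tfrac{c_W}{r_W}\bigr)(W^+-W^-),\qquad \kappa_2(W^+-W^-)\geq \bigl(1-\tfrac{c_H}{r_H}\bigr)(H^+-H^-).
\]
Multiplying these and invoking the weak-competition hypotheses $1-\kappa_1-c_W/r_W>0$ and $1-\kappa_2-c_H/r_H>0$ of Theorem \ref{main 2.1} forces $(W^+-W^-)(H^+-H^-)=0$, from which the two inequalities give $W^+=W^-$ and $H^+=H^-$; the $G,B$ boxes then collapse as well. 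Thus $(W,H,G,B)(\xi)$ converges as $\xi\to+\infty$ to a positive constant quadruple whose derivatives vanish in the limit, so it solves the algebraic equilibrium system of \eqref{s W}, and by the uniqueness recorded in \eqref{co existence} this limit equals $(W^*,H^*,G^*,B^*)$. The main obstacle is the compactness/translation bookkeeping that legitimizes the max/min principle at $\xi=0$ on the limit profile; once this is arranged, the closing algebra is short and exhibits the weak-competition hypothesis as essential.
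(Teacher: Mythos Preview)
Your argument is correct, but it is organized differently from the paper's. Both proofs dispose of the left-hand limit by sandwiching and then attack \eqref{limit right}, but from there they diverge. The paper runs a \emph{parametrized} shrinking box: it introduces one-parameter families $m_W(\theta)=\theta W^*$, $M_W(\theta)=\theta W^*+(1-\theta)(1+\epsilon)$ (and analogues for $H$), defines $\Theta$ as the set of $\theta$ for which all eight liminf/limsup values lie in the corresponding boxes, and proves $\sup\Theta=1$ by contradiction. The contradiction is obtained not via limit profiles but by a direct integral/oscillation dichotomy on the original solution: if, say, $W^-=m_W(\theta^*)$, one integrates the $W$-equation over $[0,\xi_n]$ (monotone case) or evaluates at local minimizers (oscillatory case) and uses \eqref{epsilon condition} to force the reaction term to be eventually positive, which is incompatible with the bounded left-hand side.

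Your route instead passes to translated limit profiles via $C^2$-compactness, reads off the max/min principle at $\xi=0$ to obtain the closed system of inequalities $W^++\kappa_1 H^-+(c_W/r_W)B^-\leq 1$, $W^-+\kappa_1 H^++(c_W/r_W)B^+\geq 1$, together with $W^-\leq B^-\leq B^+\leq W^+$ and the $H,G$ analogues, and finishes by the two-line algebra displaying $\kappa_1\kappa_2<(1-c_W/r_W)(1-c_H/r_H)$ as the decisive inequality. This is shorter and makes the role of the weak-competition hypotheses of Theorem~\ref{main 2.1} completely transparent; the paper's version avoids the compactness machinery and stays with the original trajectory, at the cost of the longer case analysis. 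Either approach is standard in the references \cite{chen2017traveling,ChenHsiaoWang2024} you and the paper both cite.
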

\begin{proof} It is straightforward to see that
\begin{align*}
    \lim_{\xi\rightarrow -\infty} (\underline{W},\underline{H},\underline{G},\underline{B})=\lim_{\xi\rightarrow -\infty} (\overline{W},\overline{H},\overline{G},\overline{B})=(0,0,0,0).
\end{align*}
Then, by the squeeze theorem, we obtain that
\begin{align*}
    \lim_{\xi\rightarrow -\infty}(W,H,G,B)=(0,0,0,0).
\end{align*}
Let us denote
\begin{align}
\limsup_{\xi\rightarrow +\infty} ~(W,H,G,B)=(W^+,H^+,G^+,B^+),
\end{align}
and 
\begin{align}
\liminf_{\xi\rightarrow +\infty} ~(W,H,G,B)=(W^-,H^-,G^-,B^-).
\end{align}
We first focus on the tail behaviors of $(W,B)$. To show that 
\begin{align*}
(W^\pm,H^\pm,G^\pm,B^\pm)=(W^*,H^*,G^*,B^*)=(W^*,H^*,H^*,W^*),    
\end{align*} 
consider the auxiliary functions
\begin{align*}
    &m_W(\theta)=\theta W^*,~M_W(\theta)=\theta W^*+(1-\theta)(1+\epsilon),\\
    &m_H(\theta)=\theta H^*,~M_H(\theta)=\theta H^*+(1-\theta)(1+\epsilon),
\end{align*} where
$\theta \in [0,1]$ and $\epsilon$ is a sufficiently small constant such that 
\begin{equation} \label{epsilon condition}
    \begin{aligned}
      &r_W-r_W\kappa_1(1+\epsilon)-c_W(1+\epsilon)>0,\\
      &r_H-r_H\kappa_2(1+\epsilon)-c_H(1+\epsilon)>0.
    \end{aligned}
\end{equation}
We observe that $m_W$ and $m_H$ are strictly increasing, $M_W$ and $M_H$ are strictly decreasing and, for $0<\theta_1<\theta_2<1$, 
\begin{align*}
    0=m_W(0)<m_W(\theta_1)<m_W(\theta_2)<m_W(1)=W^*,
\end{align*}
\begin{align*}
    W^*=M_W(1)<M_W(\theta_2)<M_W(\theta_1)<M_W(0)=1+\epsilon,
\end{align*}
and, similarly, 
\begin{align*}
    0=m_H(0)<m_H(\theta_1)<m_H(\theta_2)<m_H(1)=H^*,
\end{align*}
\begin{align*}
    H^*=M_H(1)<M_H(\theta_2)<M_H(\theta_1)<M_H(0)=1+\epsilon.
\end{align*}

Let
\begin{align*}
        \Theta=\{\theta: \theta\in [0,1) ~\forall~W^{\pm}, B^{\pm}\in (m_W(\theta),M_W(\theta))~~\mbox{and}~~H^{\pm}, G^{\pm}\in (m_H(\theta),M_H(\theta))\}.
\end{align*}
Recall \eqref{overline W}-\eqref{underline B}, so that it is not hard to see that $\Theta$ is not an empty set, since $0\in\Theta$. We claim that $\theta^*=\sup(\Theta)=1$. Suppose, on the contrary, that $\theta^*<1$. By continuity argument, we know 
\begin{align*}
    W^\pm,B^\pm\in \left[m_W(\theta^*),M_W(\theta^*)\right]~~\mbox{and}~~H^\pm,G^\pm\in \left[m_H(\theta^*),M_H(\theta^*)\right]
\end{align*}
and at least one of the following cases holds 
\begin{align*}
    (i)  &~W^-=m_W(\theta^*),~~~&(ii)~B^-=m_W(\theta^*),\\
    (iii)&~W^+=M_W(\theta^*),~~~&(iv)~B^+=M_W(\theta^*),\\
    (v)  &~H^-=m_H(\theta^*),~~~&(vi)~G^-=m_H(\theta^*),\\
    (vii)&~H^+=M_H(\theta^*),~~~&(viii)~G^+=M_H(\theta^*).
\end{align*}

Let us assume $(i)$ holds. Recalling \eqref{epsilon condition}, a straightforward calculation reveals that
\begin{equation} \label{liminf >0}
    \begin{aligned}
    &\liminf_{\xi\rightarrow +\infty} \left( r_W W(\xi)(1-W(\xi)-\kappa_1 H(\xi))-c_W B(\xi)W(\xi)\right) \\
    &\geq r_W m_W(\theta^*)(1-m_W(\theta^*))-(r_W\kappa_1 M_H(\theta^*)+c_W M_W(\theta^*))m_W(\theta^*)\\
    &= r_W\theta^*W^*(1-\theta^*)\left(1-\kappa_1(1+\epsilon)-\frac{c_W}{r_W}\left(1+\epsilon\right)\right)\\
    &>0.
\end{aligned} 
\end{equation}

Under this assumption, we further divide the proof into three cases. In the first case, suppose there is a $\xi_0$ such that $W_{\xi}\geq 0$ when $\xi>\xi_0$, i.e. $W$ is eventually monotonic increasing. Then \eqref{overline W} guarantees that $W(+\infty)$ exists and, hence, $\int_0^\infty W_{\xi}(\xi) d\xi=W(+\infty)-W(0)$ is finite. This implies that $\liminf_{\xi\rightarrow +\infty} W_{\xi}(\xi)=0$. Consider a sequence $\{\xi_n\}_{n=1}^{\infty}$ tending to positive infinity such that 
\begin{align*}
    \lim_{n\rightarrow \infty} W(\xi_n)=m(\theta^*)~\mbox{and}~\lim_{n\rightarrow \infty} W_{\xi}(\xi_n)=0.
\end{align*}

Integrating the first equation in \eqref{s W} over $[0,\xi_n]$, one finds that
\begin{align*}
    d_W&(W_{\xi}(0)-W_{\xi}(\xi_n))-s(W(0)-W(\xi_n))\\
    =& \int_0^{\xi_n} r_W W(\xi)(1-W(\xi)-\kappa_1 H(\xi))-c_W B(\xi)W(\xi) d\xi.
\end{align*}
It is clear that 
\begin{align*}
    \lim_{n\rightarrow +\infty} &\left(d_W(W_{\xi}(0)-W_{\xi}(\xi_n))-s(W(0)-W(\xi_n))\right)\\
    &=d_W W_{\xi}(0)-s(W(0)-\theta^*W^*),
\end{align*}
which is bounded.
On the other hand, \eqref{liminf >0} yields
\begin{align*}
\lim_{n\rightarrow \infty}&\int_0^{\xi_n} r_W W(\xi)(1-W(\xi)-\kappa_1 H(\xi))-c_W B(\xi)W(\xi) d\xi\\
    &\geq \lim_{n\rightarrow \infty}\xi_n \left(  r_W\theta^*W^*(1-\theta^*)\left(1-\kappa_1(1+\epsilon)-\frac{c_W}{r_W}\left(1+\epsilon\right)\right) \right)=\infty.
\end{align*}
We get a contradiction. For the second case, suppose there is a $\xi_0$ such that $W_{\xi}\leq 0$ when $\xi>\xi_0$. We still get a contradiction by following a similar manner above.

For the third case ($W$ is oscillatory at $+\infty$), one may choose a sequence $\{\xi_n\}_{n=1}^{\infty}$ tending to positive infinity such that $\xi_n$ is a local minimizer of $W$ for each $n\in \mathbb{N}$. Notice that this implies $W_{\xi}(\xi_n)=0$ and $W_{\xi\xi}(\xi_n)\geq 0$ for each $n\in\mathbb{N}$. Inequality \eqref{liminf >0} then yields that
\begin{align*}
    &\liminf_{n\rightarrow \infty}\Big(d_W W_{\xi\xi}(\xi_n)-sW_{\xi}(\xi_n)\\
    &~~+r_W W(\xi_n)(1-W(\xi_n)-\kappa_1 H(\xi_n))-c_W B(\xi_n)W(\xi_n)\Big)>0.
\end{align*}
A contradiction then asserts that $W^-=m(\theta^*)$ cannot hold.

Let us assume $(ii)$ holds. Based on the previous argument, $W^->m(\theta^*)=B^-$. A straightforward computation reveals that 
\begin{align*}
    \liminf_{\xi\rightarrow+\infty}(e_W B(\xi)(W(\xi)-B(\xi))= e_W B^-(W^--B^-)>0.
\end{align*}
We analyze $B$ by considering two different cases, $B$ is eventually monotone and $B$ is oscillatory at $+\infty$. Then, a similar argument as the previous one asserts that $(ii)$ cannot hold.

One may similarly show that $(iii)$ and $(iv)$ cannot hold. Also, using \eqref{epsilon condition} and following the same argument, we obtain $(v), (vi), (vii)$ and $(viii)$ cannot hold. Consequently,
we must have $\theta^*=\sup(\Theta)=1$. In other words, 
\begin{align} \label{limit for W}
    \lim_{\xi\rightarrow+\infty}(W(\xi),H(\xi),G(\xi),B(\xi))=(W^*,H^*,H^*,W^*)=(W^*,H^*,G^*,B^*).
\end{align}
This proof is now complete.
\end{proof}

\section{Existence of traveling wave solutions} \label{section 4}

\subsection{Function spaces} \label{Function spaces}
The function spaces we use are standard (see, \cite{ma2001traveling,chen2017traveling,ChenHsiaoWang2024}, for instance). For $\sigma>0$, we consider
\begin{align} \label{X sigma}
      X_{\sigma}=\{\phi(\xi)=(\phi_1(\xi),\phi_2(\xi),\phi_3(\xi),\phi_4(\xi))\in C(\mathbb{R},\mathbb{R}^4): \|\phi\|_{\sigma}<\infty\},
\end{align}
with the weighted norm 
\begin{align*} 
\|\phi\|_{\sigma}:=\sup\limits_{\xi\in\mathbb{R}}\left|\phi(\xi) e^{-\sigma|\xi|}\right|. 
\end{align*}
Let
\begin{equation} \label{convex set gamma 2}
    \Gamma = \left\{ \phi\in X_\sigma \,:\, 
\begin{aligned}
&\underline{W}(\xi)\leq\phi_1(\xi)\leq \overline{W}(\xi),~ \underline{H}(\xi)\leq\phi_2(\xi)\leq \overline{H}(\xi)\\
& \underline{G}(\xi)\leq\phi_3(\xi)\leq \overline{G}(\xi),~ ~ \underline{B}(\xi)\leq\phi_4(\xi)\leq \overline{B}(\xi)
\end{aligned}
\right\}.
\end{equation}

We pause to remark that $(X_\sigma,\|\cdot\|_\sigma)$ is a Banach space and $\Gamma$ is a nonempty closed convex subset of $(X_\sigma,\|\cdot\|_\sigma)$.

Define the function $F(y_1,y_2,y_3,y_4)=(F_1,F_2,F_3,F_4)$ as follows:
\begin{align*}
    F_1&:=\alpha y_1+r_W y_1(1-y_1-\kappa_1 y_2)-c_W y_4y_1, \\
    F_2&:=\alpha y_2+r_H y_2(1-y_2-\kappa_2 y_1)-c_H y_3 y_2,\\
    F_3&:=\alpha y_3+e_H y_3(y_2-y_3),\\
    F_4&:=\alpha y_4+e_W y_4(y_1-y_4),
\end{align*}
where $\alpha$ is a sufficiently large constant. A straightforward calculation reveals that by choosing 
\begin{align} \label{Alpha}
    \alpha>\max\{r_W(1+\kappa_1)+c_W,r_H(1+\kappa_2)+c_H,e_H,e_W\}
\end{align} 
we obtain
\begin{align} \label{condi 1}
    \frac{\partial F_i}{\partial y_i}>0,~\mbox{in}~0\leq y_i \leq 1
\end{align}
for $i=1,2,3,4$.
Also, we have
\begin{align} \label{condi 2}
    \frac{\partial F_1}{\partial y_2} \leq 0,~
    \frac{\partial F_1}{\partial y_4} \leq 0,~ \frac{\partial F_2}{\partial y_1} \leq 0,~
    \frac{\partial F_2}{\partial y_3} \leq 0,~
    \frac{\partial F_3}{\partial y_2} \geq 0,~ \frac{\partial F_4}{\partial y_1} \geq 0,
\end{align}
for $0\leq y_i \leq 1$ and $i=1,2,3,4$.

We rewrite \eqref{s W} as follows:
\begin{equation} \label{s W F form}
    \begin{aligned} 
    d_1 W_{\xi\xi}&-sW_{\xi}-\alpha W+F_1(W,H,B)=0,\\ 
    d_2 H_{\xi\xi}&-sH_{\xi}-\alpha H+F_2(W,H,G)=0,\\ 
    d_3 G_{\xi\xi}&-sG_{\xi}-\alpha G+F_3(H,G)=0,\\ 
    d_4 B_{\xi\xi}&-sB_{\xi}-\alpha B+F_4(W,B)=0,
\end{aligned}
\end{equation}
where $d_1=d_W$, $d_2=d_H$, $d_3=d_G$ and $d_4=d_B$ for notational simplicity. Also, we define 
\begin{align} \label{root of characteristic polynomial}
    \lambda_i^\pm=\frac{s\pm \sqrt{s^2+4 \alpha d_i}}{2 d_i},\,\,i=1,2,3,4.
\end{align}
We see that $\lambda_i^-<0<\lambda_i^+$ and 
\begin{align} \label{polynimial lambda}
    d_i {\lambda_i^\pm}^2-s\lambda_i^\pm-\alpha=0,\,\,i=1,2,3,4.
\end{align}

\subsection{Schauder fixed point theorem}

Our goal here is to prove the existence of \eqref{s W F form}. To this end, we apply the Schauder fixed point theorem to show that there exists $(W,H,G,B)\in C^{\infty}(\mathbb{R},\mathbb{R}^4)$ that satisfies \eqref{s W F form} or, equivalently, \eqref{s W} with asymptotic behaviors \eqref{limit left} and \eqref{limit right}. 
\begin{lemma} [Strong version of Schauder's Theorem \cite{morris1975schauder,fonseca1995degree,zeidler2012applied,zeidler1986}] \label{Schauder thm}
    Let $\Gamma$ be a nonempty closed convex subset of a Banach space and $P$ a continuous map from $\Gamma$ into a compact subset of $\Gamma$. Then $P$ has a fixed point. 
\end{lemma}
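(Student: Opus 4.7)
The plan is to deduce this ``strong'' form from the classical Schauder fixed-point theorem, which asserts that a continuous self-map of a nonempty convex compact subset of a Banach space has a fixed point. The only gap between the classical statement and the one here is that the domain $\Gamma$ is merely closed and convex, not necessarily compact; the compensation is that by hypothesis $P(\Gamma)$ is contained in some compact set $K\subseteq\Gamma$. The strategy is to cut down $\Gamma$ to a compact convex subset that still contains the image and is invariant under $P$.

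The key construction will be the closed convex hull $\overline{\mathrm{co}}(K)$ of $K$ in the ambient Banach space. First I would check the three structural properties: (i) since $K\subseteq\Gamma$ and $\Gamma$ is closed and convex, we have $\overline{\mathrm{co}}(K)\subseteq\Gamma$; (ii) by Mazur's theorem, the closed convex hull of a compact set in a Banach space is itself compact, so $\overline{\mathrm{co}}(K)$ is a nonempty convex compact subset of $\Gamma$; (iii) the chain of inclusions
\begin{equation*}
P\bigl(\overline{\mathrm{co}}(K)\bigr)\subseteq P(\Gamma)\subseteq K\subseteq\overline{\mathrm{co}}(K)
\end{equation*}
shows that $P$ restricts to a continuous self-map of $\overline{\mathrm{co}}(K)$.

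With these three points established, I would finish by applying the classical Schauder fixed-point theorem to the restriction $P\big|_{\overline{\mathrm{co}}(K)}:\overline{\mathrm{co}}(K)\to\overline{\mathrm{co}}(K)$, producing a fixed point $\phi\in\overline{\mathrm{co}}(K)\subseteq\Gamma$, which is a fixed point of $P$ in $\Gamma$ as desired.

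The only nontrivial input will be Mazur's theorem; verifying closedness, convexity, invariance, and continuity of the restriction is immediate from the hypotheses. Thus the hard part is conceptual rather than computational: recognizing that ``$P$ has relatively compact image in $\Gamma$'' is exactly the condition that, via the convex-hull trick, reduces the problem to the classical compact-convex-domain setting.
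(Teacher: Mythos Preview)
Your argument is correct and is in fact the standard way to derive this ``strong'' form from the classical compact--convex version of Schauder's theorem via Mazur's theorem. The paper, however, does not prove this lemma at all: it is stated as a known result with citations to the literature, so there is no proof in the paper to compare against.
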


Let us prove Theorem \ref{main 2.1} as follows.
\begin{proof}
    Recalling \eqref{X sigma}, \eqref{root of characteristic polynomial} and \eqref{convex set gamma 2}, let 
    \begin{align} \label{sigma condition}
    \sigma\in (0,\min\{|\lambda_1^-|,|\lambda_2^-|,|\lambda_3^-|,|\lambda_4^-|\}).    
    \end{align}
    For $(W,H,G,B)\in X_\sigma$, we consider the operator $P=(P_1,P_2,P_3,P_4): \Gamma\rightarrow X_\sigma$ defined as follows:
 \begin{align} \label{P1}
        P_1(W,H,G,B)(\xi)=\frac{1}{d_1(\lambda_1^+-\lambda_1^-)}\left[\int_{-\infty}^{\xi} e^{\lambda_1^-(\xi-z)}+\int_{\xi}^{\infty} e^{\lambda_1^+(\xi-z)} \right] F_1(W,H,B)(z)dz,
    \end{align}
    
    \begin{align} \label{P2}
        P_2(W,H,G,B)(\xi)=\frac{1}{d_2(\lambda_2^+-\lambda_2^-)}\left[\int_{-\infty}^{\xi} e^{\lambda_2^-(\xi-z)}+\int_{\xi}^{\infty} e^{\lambda_2^+(\xi-z)} \right] F_2(W,H,G)(z)dz,
    \end{align}
    
    \begin{align} \label{P3}
        P_3(W,H,G,B)(\xi)=\frac{1}{d_3(\lambda_3^+-\lambda_3^-)}\left[\int_{-\infty}^{\xi} e^{\lambda_3^-(\xi-z)}+\int_{\xi}^{\infty} e^{\lambda_3^+(\xi-z)} \right] F_3(H,G)(z)dz,
    \end{align} 
    and
     \begin{align} \label{P4}
        P_4(W,H,G,B)(\xi)=\frac{1}{d_4(\lambda_4^+-\lambda_4^-)}\left[\int_{-\infty}^{\xi} e^{\lambda_4^-(\xi-z)}+\int_{\xi}^{\infty} e^{\lambda_4^+(\xi-z)} \right] F_4(W,B)(z)dz.
    \end{align} 
Note that $P_1$ is independent on $G$, $P_2$ is independent on $B$, $P_3$ is independent on $W$ and $B$, and $P_4$ is independent on $H$ and $G$. The goal here is to show $P$ has a fixed point in $\Gamma$. We first show that $P$ maps $\Gamma$ into $\Gamma$ itself. Recalling \eqref{condi 1} and \eqref{condi 2}, one may observe that
\begin{align*}
P_1(\underline{W},\overline{H},\overline{B})(\xi)\leq P_1(W,H,B)(\xi) \leq P_1(\overline{W},\underline{H},\underline{B})(\xi),   
\end{align*}

\begin{align*}
P_2(\overline{W},\underline{H},\overline{G})(\xi)\leq P_2(W,H,G)(\xi) \leq P_2(\underline{W},\overline{H},\underline{G})(\xi),   
\end{align*}

\begin{align*}
P_3(\underline{H},\underline{G})(\xi)\leq P_3(H,G)(\xi) \leq P_3(\overline{H},\overline{G})(\xi),  
\end{align*}
and
\begin{align*}
P_4(\underline{W},\underline{B})(\xi)\leq P_4(W,B)(\xi) \leq P_4(\overline{W},\overline{B})(\xi),  
\end{align*}
for $\xi\in\mathbb{R}$. Then by
an elementary but tedious computation (see Appendix \ref{Proof P maps Gamma into itself}), we have
\begin{align} \label{W<P}
\underline{W}(\xi) \leq P_1(\underline{W},\overline{H},\overline{B})(\xi)\leq P_1(\overline{W},\underline{H},\underline{B})(\xi)\leq \overline{W}(\xi),   
\end{align}

\begin{align} \label{H<P}
\underline{H}(\xi) \leq P_2(\overline{W},\underline{H},\overline{G})(\xi)\leq P_2(\underline{W},\overline{H},\underline{G})(\xi)\leq \overline{H}(\xi),   
\end{align}

\begin{align} \label{G<P}
\underline{G}(\xi) \leq P_3(\underline{H},\underline{G})(\xi)\leq P_3(\overline{H},\overline{G})(\xi)\leq \overline{G}(\xi),  
\end{align}
and
\begin{align} \label{B<P}
\underline{B}(\xi) \leq P_4(\underline{W},\underline{B})(\xi)\leq P_4(\overline{W},\overline{B})(\xi)\leq \overline{B}(\xi),  
\end{align}
for $\xi\in\mathbb{R}$. This implies that $P(\Gamma)\subset \Gamma$.

Second, recalling \eqref{sigma condition}, we show $P:\Gamma\rightarrow\Gamma$ is continuous with respect to the norem $\|\cdot\|_\sigma$ in $X_\sigma$. The proof of continuity of $P$ is standard. However, we provide the proof in Appendix \ref{continuity of P} for completeness. Our next goal is to show $P(\Gamma)$ is inside a compact subset of $\Gamma$ and apply Lemma \ref{Schauder thm}. In fact, we prove slightly more. We actually prove $P$ is a compact operator.

We observe that for all $\phi\in\Gamma$,
\begin{align} \label{uniform bdd}
    \|P(\phi)\|_\sigma\leq\|(\overline{W},\overline{H},\overline{G},\overline{B})\|_\sigma\leq 2,
\end{align}
so $P$ is bounded. Also, one may observe that
\begin{align*}
    |F_1(W,H,B)|&=|\alpha W+r_W W(1-W-\kappa_1 H)-c_W BW |\leq \alpha+r_W(2+\kappa_1)+c_W,\\
    |F_2(W,H,G)|&=|\alpha H+r_H H(1-H-\kappa_2 W)-c_H GH |\leq \alpha+r_H(2+\kappa_2)+c_H,\\
    |F_3(H,G)|&=|\alpha G+e_H G(H-G)|\leq\alpha+e_H,\\
    |F_4(W,B)|&=|\alpha B+e_W B(W-B)|\leq\alpha+e_W,
\end{align*}
for any $(W,H,G,B)\in\Gamma$. Therefore, we obtain
\begin{equation*} 
\begin{aligned}
    &\left|\frac{d}{d\xi}P_1(W,H,B)(\xi)\right|\\
    &\leq \frac{-\lambda_1^-}{d_1(\lambda_1^+-\lambda_1^-)}\int_{-\infty}^{\xi} e^{\lambda_1^-(\xi-z)}\left|F_1(W,H,B)(z)\right|dz\\
    &\,\,\,\,+\frac{\lambda_1^+}{d_1(\lambda_1^+-\lambda_1^-)}\int_{\xi}^{\infty} e^{\lambda_1^+(\xi-z)}\left| F_1(W,H,B)(z)\right|dz\\
    &\leq \frac{2(\alpha+r_W(2+\kappa_1)+c_W)}{d_1(\lambda_1^+-\lambda_1^-)}.
\end{aligned}    
\end{equation*}
Similarly, we obtain
\begin{align*} 
    \left|\frac{d}{d\xi}P_2(W,H,G)(\xi)\right|\leq \frac{2(\alpha+r_H(2+\kappa_2)+c_H)}{d_2(\lambda_2^+-\lambda_2^-)},
\end{align*}

\begin{align*} 
    \left|\frac{d}{d\xi}P_3(H,G)(\xi)\right|\leq \frac{2(\alpha+e_H)}{d_3(\lambda_3^+-\lambda_3^-)}.
\end{align*}
and
\begin{align*} 
    \left|\frac{d}{d\xi}P_4(W,B)(\xi)\right|\leq \frac{2(\alpha+e_W)}{d_4(\lambda_4^+-\lambda_4^-)}.
\end{align*}
We know the derivative of $P$ is bounded. This implies that there exists a $L>0$ such that
\begin{align*}
    |P(\phi)(\xi)-P(\phi)(z)|\leq L|\xi-z|.
\end{align*}
In other words, $P(\Gamma)$ is equicontinuous, so for each $\epsilon>0$, there is a $\delta>0$ such that 
\begin{align} \label{equi conti}
    |\xi-z|<\delta~~\mbox{and}~~\phi\in \Gamma~~\mbox{impliy}~~|P(\phi)(\xi)-P(\phi)(z)|<\epsilon.
\end{align}

Our next goal is to show $P:\Gamma\rightarrow\Gamma$ is compact. Observe that this proof could be well-known, but we prove it in Appendix \ref{P is compact} for completeness. Applying Lemma \ref{Schauder thm}, $P$ has a fixed point $(W,H,G,B)$ and, hence, recalling \eqref{P2} and \eqref{P3}, $(W,H,G,B)\in C^{\infty}(\mathbb{R},\mathbb{R}^4)$. Also, a straightforward calculation reveals that 
 \begin{equation*}
        \begin{aligned}
         d_1 (P_1(W,H,B))_{\xi\xi}&-s(P_1(W,H,B))_{\xi}-\alpha P_1(W,H,B)+F_1(W,H,B)=0,\\
          d_2 (P_2(W,H,G))_{\xi\xi}&-s(P_2(W,H,G))_{\xi}-\alpha P_2(W,H,G)+F_2(W,H,G)=0,\\
         d_3 (P_3(H,G))_{\xi\xi}&-s(P_3(H,G))_{\xi}-\alpha P_3(H,G)+F_3(H,G)=0,\\
         d_4 (P_4(W,B))_{\xi\xi}&-s(P_4(W,B))_{\xi}-\alpha P_4(W,B)+F_4(W,B)=0,
    \end{aligned}
    \end{equation*}
so $(W,H,G,B)$ satisfies \eqref{s W}. Finally, by Lemma \ref{shrinking box lemma}, \eqref{limit left} and \eqref{limit right} hold. 

The proof is now complete. 
\end{proof}

\section{Traveling waves connecting $(0,H_*,G_*,0)$ and $(W^*,H^*,G^*,B^*)$} \label{sec 5}

\subsection{Existence}
In this subsection, we prove the existence of traveling wave solutions of \eqref{s W} satisfying the asymptotic behaviors \eqref{coe to coe}  and \eqref{limit right}. 

Let us construct another upper and lower solutions for system \eqref{s W}. 

\begin{lemma} \label{Lemma coe to coe}
    Suppose that the assumptions described in Theorem \ref{main 2.2} hold. For $s_m<s<s_{M_1}$, if \eqref{1-cw/rw}-\eqref{1-w-ch/rh} hold true, recalling \eqref{overline W}, \eqref{overline B}, \eqref{underline W}, and \eqref{underline B}, then the functions $(\overline{W},1,1,\overline{B})$ and $(\underline{W},h,h,\underline{B})$ are upper and lower solutions, respectively, satisfying Definition \ref{UL for WB}, where $\xi_W<0<\xi_B$ for some $\xi_W$ and some large enough $\xi_B$.
\end{lemma}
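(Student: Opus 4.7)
The plan is to verify the regularity conditions (H1)--(H4) and the eight differential inequalities (H5) of Definition \ref{UL for WB} for the ordered quadruples $(\overline{W},1,1,\overline{B})$ and $(\underline{W},h,h,\underline{B})$. Since the second and third components of both the upper and lower solutions are taken to be constants, the set of breakpoints collapses to $U=\{\xi_W,0,\xi_B\}$, and (H1) and (H2) are immediate. For (H3) and (H4), the derivatives of the constant components are identically zero, while for the remaining four functions the required jump inequalities at $\xi=0$, $\xi_W$, $\xi_B$ coincide exactly with those already checked in the proof of Lemma \ref{Lemma UL WB}; for instance, $\overline{W}_\xi(0^+)=0\le\lambda_W=\overline{W}_\xi(0^-)$, and similarly for $\overline{B}$.

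The inequalities \eqref{upper W}, \eqref{lower W}, \eqref{upper B}, and \eqref{lower B} are handled as in Lemma \ref{Lemma UL WB}. Under the assumption $\kappa_1=0$, the $W$-equation decouples from $H$, and the argument from Lemma \ref{Lemma UL WB}---together with the chosen parameters $\varpi$, $\delta_W$, $\omega$ satisfying \eqref{1-cw/rw}--\eqref{omega condition}---applies verbatim; in fact, the coupling term $r_W\kappa_1 e^{(\lambda_W+\lambda_H)\xi}$ that appeared in \eqref{3.1 729} simply drops out, so the verification is only easier. The $B$-inequalities depend only on $W$ and $B$, and the same Gaussian construction with sufficiently large $\xi_B$ works without modification, using \eqref{2.17 condition}.

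The remaining four inequalities are essentially trivial because the $H$- and $G$-components are constants. Substituting $\overline{H}=1$ and $\underline{G}=h$ into \eqref{upper H} gives
\begin{equation*}
r_H\cdot 1\cdot(1-1-\kappa_2\underline{W}) - c_H h = -r_H\kappa_2\underline{W} - c_H h \le 0,
\end{equation*}
while substituting $\underline{H}=h$, $\overline{G}=1$ into \eqref{lower H} and using $\overline{W}\le 1$ reduces the required inequality to $r_H(1-h-\kappa_2)\ge c_H$, which is precisely the hypothesis $h<1-\kappa_2-c_H/r_H$ from \eqref{1-w-ch/rh}. Finally, \eqref{upper G} and \eqref{lower G} become $e_H\cdot 1\cdot(1-1)=0\le 0$ and $e_H h\cdot(h-h)=0\ge 0$, respectively. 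The ordering $h<1$ is automatic from \eqref{1-w-ch/rh} since $\kappa_2,c_H/r_H>0$.

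The only genuinely technical point is the one already encountered in Lemma \ref{Lemma UL WB}: namely, $\xi_B$ must be chosen large enough so that the Gaussian lower solution $\underline{B}$ is dominated by the constant $\varpi$ of $\underline{W}$ on the relevant ranges, ensuring \eqref{lower B} holds across all three regimes $\xi\ge\xi_B$, $\xi_W\le\xi\le\xi_B$, and $\xi\le\xi_W$. Since no parameter conditions beyond those listed in Theorem \ref{main 2.2} are invoked---in particular, \eqref{delta_h condition}, \eqref{eta condition}, and \eqref{2.31 condition} are no longer needed because $H$ and $G$ are constants---the proof reduces, after these simplifications, to direct quotation of the $(W,B)$-calculations from Lemma \ref{Lemma UL WB}.
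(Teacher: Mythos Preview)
Your proposal is correct and follows essentially the same approach the paper intends: the paper's own proof simply reads ``The proof is similar to the previous one, so we omit it,'' and your argument is precisely the natural elaboration of that remark---quoting the $(W,B)$ calculations from Lemma~\ref{Lemma UL WB} (simplified by $\kappa_1=0$) and directly checking the constant $H$- and $G$-components via \eqref{1-w-ch/rh}. Nothing is missing.
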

\begin{proof}
    The proof is similar to the previous one, so we omit it.
\end{proof}

Let us prove Theorem \ref{main 2.2} here.
\begin{proof}
    Similarly, the shrinking box argument implies that
    \begin{align*} 
    \lim_{\xi\rightarrow -\infty}(W,H,G,B)=(0,H_*,G_*,0)~~\mbox{and}~~\lim_{\xi\rightarrow\infty}(W,H,G,B)=(W^*,H^*,G^*,B^*).
\end{align*}
The proof of Theorem \ref{main 2.2} is complete.
\end{proof}

\subsection{N-barrier maximum principle}

In this subsection, we estimate the total value of both wood and grass in this ecological system. In other words, given $c_1,c_2>0$, can we find non-trivial lower bounds (depending on the parameters) of $c_1 W(\xi)+c_2 H(\xi)$, where $W(\xi)$ and $H(\xi)$ satisfy \eqref{s W} and \eqref{coe to coe}? To obtain such estimates, we successfully generalize a so called $N-$barrier maximum (minimum) principle, which has been applied to Lotka-Volterra systems in several occasions, to our complex waves. For more details about this maximum principle,  we refer to \cite{hsiao2022estimates} Theorem 3, \cite{chen2020discrete} Theorem 1.2, \cite{hung2016n} Theorem 1.1, \cite{chen2016nonexistence} Theorem 2.2, \cite{chen2016n} Theorem 1.1, and \cite{chen2016maximum} Theorem 1.1.

Given two arbitrary positive constants $c_1$ and $c_2$, recall \eqref{q p}. Let us define a useful function as follows:
\begin{align*} 
    F(W,H,G,B;c_1,c_2):=&\frac{c_1}{d_W} r_W W(1-W-\kappa_1 H-\frac{c_W}{r_W}B)\\
    &+\frac{c_2}{d_H} r_H H(1-H-\kappa_2 W-\frac{c_H}{r_H}G).
\end{align*}

\begin{lemma}
The quadratic curve $F(W,H,1,1;c_1,c_2)=0$ is an ellipse if $\triangle<0$; a hyperbola if $\triangle>0$; a parabola if $\triangle=0$, where 
\begin{align} \label{triangle}
    \triangle=\left(\frac{c_1 r_W \kappa_1}{d_W}+\frac{c_2 r_H \kappa_2}{d_H}\right)^2-\frac{4c_1 c_2 r_W r_H}{d_W d_H}.
\end{align}
\end{lemma}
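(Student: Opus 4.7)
The plan is to expand $F(W,H,1,1;c_1,c_2)$ as a quadratic polynomial in the two real variables $W$ and $H$, read off the coefficients of the three quadratic monomials $W^2$, $WH$, $H^2$, and then invoke the classical discriminant criterion for real plane conics.

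First I would substitute $G=B=1$ directly into the definition of $F$. The quadratic-in-$(W,H)$ part comes entirely from the $-W^2$ and $-\kappa_1 WH$ pieces of the first summand and the $-H^2$ and $-\kappa_2 WH$ pieces of the second; all terms involving $c_W/r_W$ and $c_H/r_H$ produce only linear contributions, as do the constant-$1$ terms inside each parenthesis. Collecting the quadratic coefficients gives $A=-c_1 r_W/d_W$ for $W^2$, $C=-c_2 r_H/d_H$ for $H^2$, and $B=-(c_1 r_W\kappa_1/d_W+c_2 r_H\kappa_2/d_H)$ for the cross term $WH$, while the linear terms do not influence the conic type.

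With these coefficients in hand, the standard quantity $B^2-4AC$ yields precisely the $\triangle$ in \eqref{triangle}: the product $AC$ has two negative signs and so contributes $+c_1c_2r_Wr_H/(d_Wd_H)$, and the factor of $4$ matches. The lemma then follows immediately from the textbook classification: an equation of the form $AW^2+BWH+CH^2+DW+EH+F_0=0$ defines an ellipse, a parabola, or a hyperbola according as $B^2-4AC$ is negative, zero, or positive.

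The one subtlety I anticipate is that this classification is strictly valid only for non-degenerate conics; one must in principle check that the $3\times 3$ symmetric matrix built from all coefficients of $F(W,H,1,1;c_1,c_2)$ has nonzero determinant to rule out degenerate possibilities such as a point, a line, a pair of lines, or the empty set. Since every coefficient arising here is assembled from the strictly positive parameters $c_1,c_2,d_W,d_H,r_W,r_H,\kappa_1,\kappa_2,c_W,c_H$, I expect non-degeneracy to hold generically and to be easy to check by direct computation; the authors may also simply be using the lemma for its generic geometric shape in the subsequent N-barrier construction, in which case no additional work is needed.
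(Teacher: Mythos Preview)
Your proposal is correct and follows exactly the same approach as the paper: the paper's entire proof is the single sentence ``The discriminant of the quadratic curve is \eqref{triangle},'' which is precisely the $B^2-4AC$ computation you carry out in detail. Your discussion of non-degeneracy goes beyond what the paper addresses, and indeed the lemma is only used to indicate the generic geometric shape for the subsequent N-barrier construction.
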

\begin{proof}
    The discriminant of the quadratic curve is \eqref{triangle}. This completes the proof.
\end{proof}

We pause to remark that $(0,H_*)$ and $(W^*,H^*)$ are not in the region $E=\{(W,H)\in \mathbb{R}^2:W\geq 0, H\geq 0, F(W,H,1,1;c_1,c_2)\geq 0\}$, since $F(0,H_*,1,1;c_1,c_2)<0$ and $F(W^*,H^*,1,1;c_1,c_2)<0$. Also, recall \eqref{tilde W H}. We notice that all line segments $L_1=\{t(\tilde{W},0)+(1-t)(0,\tilde{H}):t\in(0,1)\}$, $L_2=\{(W,0):0\leq W\leq \tilde{W}\}$, $L_3=\{(0,H):0\leq H\leq\tilde{H}\}$ are contained in $E$.

Now, we are ready to prove Theorem \ref{main thm 3}.
\begin{proof}
 Let us classify four cases as follows:
 \begin{enumerate}[label=(\roman*)]
    \item When $c_1 \tilde{W}\geq c_2 \tilde{H}$, $d_H <d_W$, we choose $(l_1,l_2,l_3)=(c_2 \tilde{H},\frac{c_2}{d_W}\tilde{H},\frac{c_2 d_H}{d_W}\tilde{H})$;
    \item When $c_1 \tilde{W}\geq c_2 \tilde{H}$, $d_H\geq d_W$, we choose $(l_1,l_2,l_3)=(c_2 \tilde{H},\frac{c_2}{d_H}\tilde{H},\frac{c_2 d_W}{d_H}\tilde{H})$;
    \item When $c_1 \tilde{W}<c_2 \tilde{H}$, $d_H<d_W$, we choose $(l_1,l_2,l_3)=(c_1 \tilde{W},\frac{c_1}{d_W}\tilde{W},\frac{c_1 d_H}{d_W}\tilde{W})$;
    \item When $c_1 \tilde{W}<c_2 \tilde{H}$, $d_H\geq d_W$, we choose $(l_1,l_2,l_3)=(c_1 \tilde{W},\frac{c_1}{d_H}\tilde{W},\frac{c_1 d_W}{d_H}\tilde{W})$.
\end{enumerate}
With these preparations, recalling \eqref{q p}, we have
\begin{align*}
    Q_{l_1}\subset P_{l_2}\subset Q_{l_3}\subset E,
\end{align*}
where
\begin{equation} \label{QPQ}
\begin{aligned}
    Q_{l_1}&=\left\{(W,H): q\leq l_1, W\geq 0, H\geq 0\right\},\\
    P_{l_2}&=\left\{(W,H): p\leq l_2, W\geq 0, H\geq 0\right\},\\
    Q_{l_3}&=\left\{(W,H): q\leq l_3, W\geq 0, H\geq 0\right\},
\end{aligned}
\end{equation}
are called a $N$-barrier in the $WH$ plane for our system. 

Recall \eqref{q p}. Multiplying the first and second equations by $\frac{c_1}{d_W}$ and $\frac{c_2}{d_H}$ respectively, and subsequently summing the results, yields
\begin{align} \label{q p equation}
    q''(\xi)-s p'(\xi)+F(W(\xi),H(\xi),G(\xi),B(\xi);c_1,c_2)=0,~~\xi\in\mathbb{R}.
\end{align}
Suppose the conclusion \eqref{minimum of p} does not hold. There exits $\xi^*\in\mathbb{R}$ such that $q(\xi^*)<l_3$. Since $(W,H)\in C^2(\mathbb{R},\mathbb{R}^2)$ and satisfies \eqref{coe to coe}, one may assume that 
\begin{align} \label{min q}
    \min\limits_{\xi\in\mathbb{R}}q(\xi)=q(\xi^*).
\end{align}

We denote $\overline{\xi}$ the first point at which the solution $(W(\xi),H(\xi))$ intersects the line $q=l_3$ in the $WH-$plane when $\xi$ moves from $\xi^*$ towards $-\infty$. We integrate \eqref{q p equation} with respect to $\xi$ from $\overline{\xi}$ to $\xi^*$ to obtain
\begin{align} \label{q p int eq}
    q'(\xi^*)-q'(\overline{\xi})-s(p(\xi^*)-p(\overline{\xi}))+\int_{\overline{\xi}}^{\xi^*} F(W,H,G,B;c_1,c_2)(\xi) d\xi=0.
\end{align}
On the other hand, recalling \eqref{min q}, we obtain
\begin{align} \label{q' estimate}
     q'(\xi^*)-q'(\overline{\xi})=-q'(\overline{\xi})\geq 0.
\end{align}
Recalling \eqref{QPQ}, we arrive at
\begin{align} \label{p estimate}
     s(p(\xi^*)-p(\overline{\xi}))<0.
\end{align}
Finally, we have $(W(\xi),H(\xi))\in E$ and $B(\xi)< 1$, $G(\xi)<1$ when $\overline{\xi}\leq\xi\leq\xi^*$, so 
\begin{align} \label{F estimate}
    F(W,H,G,B;c_1,c_2)(\xi)\geq F(W,H,1,1;c_1,c_2)(\xi)\geq 0.
\end{align}
Combining \eqref{q' estimate},\eqref{p estimate}, and \eqref{F estimate}, one may find 
\begin{align*}
    q'(\xi^*)-q'(\overline{\xi})-s(p(\xi^*)-p(\overline{\xi}))+\int_{\overline{\xi}}^{\xi^*} F(W,H,G,B;c_1,c_2)(\xi) d\xi>0.
\end{align*}
This contradicts to \eqref{q p int eq}. The proof of Theorem \ref{main thm 3} is now complete.
\end{proof}

\section{Acknowledgement}
We are deeply grateful to Ching-Lin Huang for his insights on ecology, particularly his valuable suggestions regarding ecological stability. Chiun-Chuan Chen
is supported by the National Science and Technology Council, Taiwan (Grant Number
111-2115-M-002-008-MY3) and the National Center for Theoretical Sciences, Taiwan
(NCTS).

\section{Appendix}
\subsection{Proof of $P(\Gamma)\subset \Gamma$ } \label{Proof P maps Gamma into itself}

For \eqref{H<P}, let us first show that $P_2(\overline{W},\underline{H},\overline{G})\geq \underline{H}$. Recalling \eqref{underline H}, we first assume $\xi>\xi_H$ and make a straightforward calculation as follows:
\begin{equation*}
    \begin{aligned}
        &P_2(\overline{W},\underline{H},\overline{G})(\xi)\\
        =&\frac{1}{d_H(\lambda^+_2-\lambda^-_2)}\left[\int_{-\infty}^\xi e^{\lambda^-_2(\xi-z)}+\int_{\xi}^\infty e^{\lambda^+_2(\xi-z)}\right] F_2(\overline{W},\underline{H},\overline{G})(z) dz\\
        \overset{\eqref{lower H}}{\geq}&\frac{1}{d_H(\lambda^+_2-\lambda^-_2)}\left[\int_{-\infty}^\xi e^{\lambda^-_2(\xi-z)}+\int_{\xi}^\infty e^{\lambda^+_2(\xi-z)}\right] \left(-d_H\underline{H}_{zz}+s\underline{H}_z+\alpha\underline{H}\right) dz.
    \end{aligned}
\end{equation*}
One may notice that 
\begin{equation} \label{5.1 I}
\begin{aligned}
   \int_{\xi}^\infty e^{\lambda^+_2(\xi-z)} \left(-d_H\underline{H}_{zz}+s\underline{H}_z+\alpha\underline{H}\right) dz=\frac{\alpha}{\lambda_2^+}\underline{H}(\xi)
\end{aligned}    
\end{equation}
and
\begin{equation} \label{5.2 II}
    \begin{aligned}
        \int_{\xi_H}^{\xi}e^{\lambda^-_2(\xi-z)} \left(-d_H\underline{H}_{zz}+s\underline{H}_z+\alpha\underline{H}\right) dz=\frac{\alpha}{\lambda^-_2}e^{\lambda^-_2(\xi-\xi_H)}\underline{H}(\xi_H)-\frac{\alpha}{\lambda_2^-}\underline{H}(\xi).
    \end{aligned}
\end{equation}
On the other hand, we have
\begin{equation} \label{5.3 III}
    \begin{aligned}
        &\int_{-\infty}^{\xi_H} e^{\lambda^-_2(\xi-z)} \left(-d_H\underline{H}_{zz}+s\underline{H}_z+\alpha\underline{H}\right) dz\\
        =&-d_He^{\lambda^-_2(\xi-\xi_H)} \underline{H}_{\xi}(\xi_H^-)-d_H\lambda^-_2\int_{-\infty}^{\xi_H} e^{\lambda^-_2(\xi-z)} \underline{H}_z dz\\
        &+s\int_{-\infty}^{\xi_H} e^{\lambda^-_2(\xi-z)} \underline{H}_z dz-\frac{\alpha}{\lambda^-_2}e^{\lambda^-_2(\xi-\xi_H)}\underline{H}(\xi_H)+\frac{\alpha}{\lambda^-_2}\int_{-\infty}^{\xi_H} e^{\lambda^-_2(\xi-z)}\underline{H}_z dz\\
        \overset{\eqref{polynimial lambda}}{=}&-d_He^{\lambda^-_2(\xi-\xi_H)} \underline{H}_{\xi}(\xi_H^-)-\frac{\alpha}{\lambda^-_2}e^{\lambda^-_2(\xi-\xi_H)}\underline{H}(\xi_H)
    \end{aligned}
\end{equation}
Combining \eqref{5.1 I}, \eqref{5.2 II} and \eqref{5.3 III} and recalling \eqref{underline H}, \eqref{Alpha}, and \eqref{polynimial lambda}, one may obtain that
\begin{align*}
    P_2(\overline{W},\underline{H},\overline{G})(\xi)&\geq \frac{1}{d_H(\lambda^+_2-\lambda^-_2)}\left(\frac{\alpha}{\lambda^+_2}\underline{H}(\xi)-\frac{\alpha}{\lambda^-_2}\underline{H}(\xi)-d_He^{\lambda^-_2(\xi-\xi_H)}\underline{H}_\xi(\xi_H^-)\right)\\
    &\geq \underline{H}(\xi).
\end{align*}
One may follow a similar argument for the case $\xi\leq \xi_H$. Therefore, we show that 
\begin{align} \label{P2>H}
P_2(\overline{W},\underline{H},\overline{G})(\xi)\geq \underline{H}(\xi) ~~\mbox{for}~~\xi\in\mathbb{R}.    
\end{align}

For \eqref{G<P}, let us show $P_3(\underline{H},\underline{G})\geq \underline{G}$. Recall that $\underline{G}\in C^{1}(\mathbb{R},\mathbb{R})$. A straightforward calculation reveals that
\begin{equation} \label{P3>G}
    \begin{aligned}
        &P_3(\underline{H},\underline{G})(\xi)\\
        =&\frac{1}{d_G(\lambda^+_3-\lambda^-_3)}\left[\int_{-\infty}^\xi e^{\lambda^-_3(\xi-z)}+\int_{\xi}^\infty e^{\lambda^+_3(\xi-z)}\right] F_3(\underline{H},\underline{G})(z) dz\\
        \overset{\eqref{lower G}}{\geq}&\frac{1}{d_G(\lambda^+_3-\lambda^-_3)}\left[\int_{-\infty}^\xi e^{\lambda^-_3(\xi-z)}+\int_{\xi}^\infty e^{\lambda^+_3(\xi-z)}\right] \left(-d_G\underline{G}_{zz}+s\underline{G}_z+\alpha\underline{G}\right) dz\\
        =&\underline{G}(\xi).
    \end{aligned}
\end{equation}
Combining \eqref{P2>H} and \eqref{P3>G}, we prove that 
\begin{align*}
    (\underline{H},\underline{G})\leq P(H,G)~~\mbox{for}~~(H,G)\in\Gamma.
\end{align*}
Similarly, one may follow the previous argument to prove that 
\begin{align*}
    (\overline{H},\overline{G})\geq P(H,G)~~\mbox{for}~~(H,G)\in\Gamma.
\end{align*}
Other cases \eqref{W<P} and \eqref{B<P} can be proven similarly. Finally, we obtain $P(\Gamma)\subset\Gamma$.

\subsection{Proof of continuity of $P$ with respect to $\|\cdot\|_\sigma$} \label{continuity of P}
Let $\hat{\phi}=(\hat{W},\hat{H},\hat{G},\hat{B})\in\Gamma$ and $\check{\phi}=(\check{W},\check{H},\check{G},\check{B})\in\Gamma$. Let us assume that $\xi>0$. One may obtain
\begin{equation*}
    \begin{aligned}
        &\left|P_2(\hat{\phi})(\xi)-P_2(\check{\phi})(\xi)\right|\\
        \leq&\frac{1}{d_H(\lambda^+_2-\lambda^-_2)}\left[\int_{-\infty}^\xi e^{\lambda^-_2(\xi-z)}+\int_{\xi}^\infty e^{\lambda^+_2(\xi-z)}\right]\left|F_2(\hat{\phi})(z)-F_2(\check{\phi})(z)\right| dz\\
        \leq& \frac{1}{d_H(\lambda^+_2-\lambda^-_2)}\left[\int_{-\infty}^\xi e^{\lambda^-_2(\xi-z)+\sigma|z|}+\int_{\xi}^\infty e^{\lambda^+_2(\xi-z)+\sigma|z|}\right]\left\|F_2(\hat{\phi})-F_2(\check{\phi})\right\|_\sigma dz\\
        =&\frac{1}{d_H(\lambda^+_2-\lambda^-_2)}\left(\frac{2\sigma}{{\lambda^-_2}^2-\sigma^2}e^{\lambda^-_2 \xi}+\frac{\lambda^+_2-\lambda^-_2}{(-\lambda^-_2+\sigma)(\lambda^+_2-\sigma)}e^{\sigma \xi}\right)\left\|F_2(\hat{\phi})-F_2(\check{\phi})\right\|_\sigma.
    \end{aligned}
\end{equation*}
Similarly, for $\xi\leq 0$, one may find that
\begin{align*}
    &\left|P_2(\hat{\phi})(\xi)-P_2(\check{\phi})(\xi)\right|\\
    \leq&\frac{1}{d_H(\lambda^+_2-\lambda^-_2)}\left(\frac{2\sigma}{{\lambda^+_2}^2-\sigma^2}e^{\lambda^+_2 \xi}+\frac{\lambda^+_2-\lambda^-_2}{(-\lambda^-_2-\sigma)(\lambda^+_2+\sigma)}e^{-\sigma \xi}\right)\left\|F_2(\hat{\phi})-F_2(\check{\phi})\right\|_\sigma.
\end{align*}
Multiplying $e^{-\sigma|\xi|}$ and recalling \eqref{sigma condition}, we obtain
\begin{align} \label{P<F}
    \left|P_2(\hat{\phi})(\xi)-P_2(\check{\phi})(\xi)\right|e^{-\sigma|\xi|} \lesssim \left\|F_2(\hat{\phi})-F_2(\check{\phi})\right\|_\sigma.
\end{align}
Also, recalling \eqref{overline H}, \eqref{overline G}, \eqref{underline H}, \eqref{underline G} and \eqref{convex set gamma 2}, we notice that 
\begin{equation} \label{F<L phi}
    \begin{aligned}
         &\left|F_2(\hat{\phi})(\xi)-F_2(\check{\phi})(\xi)\right|\\
        \leq&\left|(\alpha+r_H)(\hat{H}(\xi)-\check{H}(\xi))\right|+\left|r_H \kappa_2 (\hat{W}(\xi)\hat{H}(\xi)-\check{W}(\xi)\check{H}(\xi))\right|\\
        &+\left|r_H (\hat{H}^2(\xi)-\check{H}^2(\xi))\right|+\left|c_H(\hat{G}(\xi)\hat{H}(\xi)-\check{H}(\xi)\check{G}(\xi))\right|\\
        \lesssim&\left|\hat{\phi}(\xi)-\check{\phi}(\xi)\right|.
    \end{aligned}
\end{equation}
Multiplying $e^{-\sigma|\xi|}$, \eqref{F<L phi} implies that 
\begin{align} \label{F<phi}
    \left|F_2(\hat{\phi})(\xi)-F_2(\check{\phi})(\xi)\right| e^{-\sigma|\xi|}\lesssim \left\|\hat{\phi}-\check{\phi}\right\|_\sigma.
\end{align}
Combining \eqref{P<F} and \eqref{F<phi}, we obtain
\begin{align*}
   \left|P_2(\hat{\phi})(\xi)-P_2(\check{\phi})(\xi)\right| e^{-\sigma|\xi|} \lesssim \left\|\hat{\phi}-\check{\phi}\right\|_\sigma.
\end{align*}
A similar argument can be applied to show 
\begin{align*}
   \left|P_1(\hat{\phi})(\xi)-P_1(\check{\phi})(\xi)\right| e^{-\sigma|\xi|} \lesssim \left\|\hat{\phi}-\check{\phi}\right\|_\sigma,
\end{align*}
\begin{align*}
   \left|P_3(\hat{\phi})(\xi)-P_3(\check{\phi})(\xi)\right| e^{-\sigma|\xi|} \lesssim \left\|\hat{\phi}-\check{\phi}\right\|_\sigma,
\end{align*}
and
\begin{align*}
   \left|P_4(\hat{\phi})(\xi)-P_4(\check{\phi})(\xi)\right| e^{-\sigma|\xi|} \lesssim \left\|\hat{\phi}-\check{\phi}\right\|_\sigma.
\end{align*}
Therefore, we show $P:\Gamma\rightarrow\Gamma$ is continuous with respect to $\|\cdot\|_\sigma$.

\subsection{$P:\Gamma\rightarrow\Gamma$ is compact} \label{P is compact}
We aim to show that $P(\Gamma)$ is relatively compact. Suppose that we are given a sequence of functions $\{\psi_m\}$ in $P(\Gamma)\subset \Gamma$. We pause to remark that the functions 
\begin{align*}
    \psi_m: \mathbb{R}\rightarrow\mathbb{R}^4,~~m=1,2,\ldots
\end{align*}
are continuous and satisfy $\|\psi_m\|_\sigma<\infty$. Also, there are $\{\phi_m\}\subset \Gamma$ such that $P(\phi_m)=\psi_m$, so $\psi_m$ satisfy \eqref{uniform bdd} and \eqref{equi conti} for all $m=1,2,\ldots$.

Let $\mathbb{Q}=\{r_i:i=1,2,\ldots\}$ denote the rational numbers contained in $\mathbb{R}$. By \eqref{uniform bdd}, the sequence $\{\psi_m(r_1)\}$ is bounded in $\mathbb{R}^4$. By Bolzano–Weierstrass theorem, there is a subsequence $\{\psi^{(1)}_m\}$ of $\{\psi_m\}$ such that $\{\psi^{(1)}_m(r_1)\}$ is convergent. In other words, there exists $w_1\in\mathbb{R}^4$ such that 
\begin{align*}
    \psi^{(1)}_m(r_1)\rightarrow w_1~~\mbox{as}~~m\rightarrow\infty
\end{align*}
Then again by $\eqref{uniform bdd}$, the sequence $\{\psi^{(1)}_m(r_2)\}$ is bounded in $\mathbb{R}^4$ and, hence, we apply Bolzano–Weierstrass theorem to obtain a sequence $\{\psi^{(2)}_m\}$ of $\{\psi^{(1)}_m\}$ such that 
\begin{align*}
    \psi^{(2)}_m(r_2)\rightarrow w_2 ~~\mbox{in}~~\mathbb{R}^4~~\mbox{as}~~m\rightarrow\infty.
\end{align*}
Let us continue in this manner \emph{ad infinitum}, that is, constructing a subsequence $\{\psi^{(k)}_m\}$ of $\{\psi_m\}$ such that, as $m\rightarrow\infty$,
\begin{equation}
    \begin{aligned}
        &\psi^{(1)}_1(r_1),\psi^{(1)}_2(r_1),\psi^{(1)}_3(r_1),\ldots~~~~\rightarrow w_1,\\
        &\psi^{(2)}_1(r_2),\psi^{(2)}_2(r_2),\psi^{(2)}_3(r_2),\ldots~~~~\rightarrow w_2,\\
        &\psi^{(3)}_1(r_3),\psi^{(3)}_2(r_3),\psi^{(3)}_3(r_3),\ldots~~~~\rightarrow w_3,\\
        &\ldots
    \end{aligned}
\end{equation}
Here, we denote $\{\psi^{(k+1)}_m\}$ is a subsequence of $\{\psi^{(k)}_m\}$ for all $k=1,2,\ldots$.

Thus we see that the diagonal sequence
\begin{align*}
    \varphi_m:=\psi^{(m)}_m, ~~m=1,2,\ldots
\end{align*}
satisfying
\begin{align} \label{varphi tends to w}
    \varphi_m(r_j)\rightarrow w_j~~\mbox{as}~~m\rightarrow\infty~~\mbox{for}~~\mbox{all}~~j=1,2\ldots.
\end{align}

Recall \eqref{equi conti}. Let $\epsilon>0$ be given. We choose the number $\delta>0$ such that 
\begin{align} \label{varphi-varphi}
    |\varphi_m(\xi)-\varphi_m(z)|<\epsilon, ~~\mbox{as}~~|\xi-z|<\delta.
\end{align}
Since $\varphi_m\in\Gamma$, $|\varphi_m(\xi)|<2$ for all $x\in\mathbb{R}$, $m=1,2,\ldots$. One may choose $n$ that is sufficiently large such that
\begin{align} \label{outside estimate}
    \sup_{z\in[-n,n]^c}\left||\varphi_m(z)-\varphi_l(z)| e^{-\sigma|z|}\right|<\epsilon,
\end{align}
for all $m,l=1,2,\ldots$.

Let us choose a partition $\{\xi_{1,n},\ldots,\xi_{s,n}\}\subset\mathbb{Q}\cap[-n,n]$ such that, for each $\xi\in[-n,n]$, there is some $\xi_{j,n}$ such that
\begin{align} \label{319 shooting}
    |\xi-\xi_{j,n}|<\delta.
\end{align}

By \eqref{varphi tends to w}, for each $j=1,\ldots,s$, the  $\{\varphi_m(\xi_{j,n})\}$ is a convergent sequence, and, hence, it is a Cauchy sequence. There exists a $N(\epsilon,n)>0$ such that 
\begin{align*}
    |\varphi_m(\xi_{j,n})-\varphi_l(\xi_{j,n})|<\epsilon~~\mbox{for}~~\mbox{all}~~m,l\geq N(\epsilon,n), ~j=1,\ldots,s.
\end{align*}
For each $\xi\in[-n,n]$, it follows from \eqref{varphi-varphi} and \eqref{319 shooting} that
\begin{equation} \label{3 epsilon estimate}
    \begin{aligned}
        |\varphi_m(\xi)-\varphi_l(\xi)|\leq& |\varphi_m(\xi)-\varphi_m(\xi_{j,n})|\\
        &+|\varphi_m(\xi_{j,n})-\varphi_l(\xi_{j,n})|+|\varphi_l(\xi_{j,n})-\varphi_l(\xi)|<3\epsilon,
    \end{aligned}
\end{equation}
for all $m,l\geq N(\epsilon,n)$.

Combining \eqref{outside estimate} and \eqref{3 epsilon estimate}, we obtain
\begin{align*}
    \|\varphi_m-\varphi_l\|_\sigma=\sup\limits_{\xi\in\mathbb{R}}\left||\varphi_m(\xi)-\varphi_l(\xi)| e^{-\sigma|\xi|}\right|<3\epsilon,
\end{align*}
for all $m,l\geq N(\epsilon)$. In other words, $\{\varphi_m\}$ is a Cauchy sequence in Banach space $(X_\sigma,\|\cdot\|_\sigma)$, so $\{\varphi_m\}$ converges. Since we construct a convergent subsequence of $\{\psi_m\}$, $P(\Gamma)$ is relatively compact and, hence, $P:\Gamma\rightarrow\Gamma$ is a compact operator.

\begin{center}
\bibliographystyle{alpha}
\bibliography{Bibjournal.bib}
\end{center}

\end{document}